%% file: QML_Edge_AI.tex
\documentclass[journal,comsoc]{IEEEtran}

\usepackage{amsbsy,amsmath,amssymb,graphicx,dsfont,mathrsfs,multirow,amsthm}
\usepackage{algpseudocode}
\usepackage{subcaption,float}

\usepackage[colorlinks=true,
            linkcolor=blue,
            citecolor=blue,
            urlcolor=blue]{hyperref}
            
\usepackage[noabbrev]{cleveref}
\crefrangeformat{equation}{#3(#1)#4--#5(#2)#6}

\usepackage[linesnumbered,lined,boxed,ruled,commentsnumbered]{algorithm2e}
\usepackage{blkarray}
\usepackage{booktabs}
\usepackage{pgfplots} \pgfplotsset{compat=1.18}
\usepackage[T1]{fontenc}
\usepackage[utf8]{inputenc}

\usepackage{pifont}

\usepackage{ragged2e}

\newtheorem{proposition}{Proposition}

\newtheorem{lemma}{Lemma}
\newtheorem{theorem}{\textbf{\textsc{Theorem}}}
\newtheorem{remark}{\emph{Remark}}
\newtheorem{corollary}{Corollary}[theorem]

\DeclareMathAlphabet{\mathpzc}{OT1}{pzc}{m}{it}

\usepackage{color}

\ifCLASSOPTIONcompsoc
\usepackage[nocompress]{cite}
\else
\usepackage{cite}
\fi
\ifCLASSINFOpdf
\else
\fi

\usepackage{setspace}

\usepackage{array}
\newcolumntype{M}[1]{>{\centering\arraybackslash}m{#1}}

\usepackage{tikz}
\usetikzlibrary{decorations.pathreplacing}

\usepackage{xcolor}

\newif\ifshowrevisions
\showrevisionsfalse

\definecolor{newrevcolor}{RGB}{153,0,153}

\definecolor{modifiedrevcolor}{RGB}{190,90,0}
\definecolor{addedrevcolor}{RGB}{0,120,60}

\definecolor{numrevcolor}{RGB}{0,105,145}

\begin{document}
	



\title{Proportional-Fair Resource Allocation and Dual-Threshold Early-Exit Inference for Secure Cooperative Multi-Layer Edge Intelligence}

\author{Thai T. Vu, John Le, Tu N. Nguyen, Jun Shen, Quang Vinh Duong, Ha Nguyen

    \IEEEcompsocitemizethanks{
        \IEEEcompsocthanksitem Thai T. Vu, John Le, Jun Shen, and Quang Vinh Duong are with the School of Computing and Information Technology, University of Wollongong, Wollongong, NSW, Australia (e-mail: tienv@uow.edu.au; johnle@uow.edu.au;
        jshen@uow.edu.au;
        duongquangvinh97@gmail.com).
        \IEEEcompsocthanksitem Tu N. Nguyen is with the Department of Computer Science, Kennesaw State University, Georgia, USA (e-mail: tu.nguyen@kennesaw.edu).
        \IEEEcompsocthanksitem Ha Nguyen is a scientist at CodeZX Software Company Limited, Vietnam (e-mail: ha.nguyen.fzx@gmail.com).
        \IEEEcompsocthanksitem This work has been submitted to the IEEE for possible publication. Copyright may be transferred without notice.
    }
}

\IEEEtitleabstractindextext{
\begin{abstract}
\input{0__abstract}		
\end{abstract}

\begin{IEEEkeywords}
    Wireless edge intelligence, fair resource allocation, dual-threshold inference, early-exit CNN, secure offloading, mixed-integer conic optimization.
\end{IEEEkeywords}
}

\maketitle

\IEEEdisplaynotcompsoctitleabstractindextext
\IEEEpeerreviewmaketitle

\input{1__introduction}

\input{3__system_and_problem}

\input{4__proposed_solution}

\input{5__numerical_results}

\input{6__conclusion}

\input{7__appendices}

\bibliographystyle{IEEEtran}
\bibliography{QML_Edge_AI_Refs,related_work_extra}

\end{document}

%% file: 0__abstract.tex
This paper proposes FREDI (Fair Resource Allocation for Edge Dual-Threshold Inference), a secure wireless edge-intelligence framework for event-triggered inference in a cooperative user equipment (UE)--edge server (ES)--cloud system. Each UE performs early-exit convolutional neural network (CNN) screening using dual confidence thresholds, while critical events are securely offloaded to an edge server for detailed classification. We formulate a proportionally-fair utility maximization problem that jointly optimizes UE--ES association, wireless and processing resources, and confidence thresholds. FREDI decomposes the problem into proportional-fair resource allocation and dual-threshold inference optimization. We prove that the detected-critical event set is set-monotone non-increasing in both thresholds, and exploit the finite empirical confidence domain for exact threshold optimization. An empirical resource--utility response envelope yields a computable global suboptimality bound and a sufficient condition for global optimality. By pre-eliminating infeasible UE--ES pairs and exactly projecting out bandwidth and transmit-power variables, the resource-allocation subproblem is reduced to a mixed-integer exponential-cone program solvable to the certified global optimality within a prescribed gap. Numerical results with early-exit MobileNetV2 and ShuffleNetV2 demonstrate near-perfect UE fairness with aggregate utility close to a Sum-Utility benchmark, reveal security-induced resource fragmentation, and demonstrate the Stage-A scalability from 6 to 144 UEs with median solving time below 0.1~s in the tested configurations.

%% file: 1__introduction.tex
\section{Introduction}
\label{sec:introduction}

The growing demand for latency-sensitive intelligent services has accelerated the implementation of \emph{wireless edge intelligence}, in which communication and nearby computing resources are jointly orchestrated for distributed inferences~\cite{letaief2021edge,mendez2022edge}. In a three-layer deployment, user equipment (UEs) observes events close to their sources, edge servers (ESs) provide more capable inference, and a cloud server (CS) periodically updates the models. This hierarchy reduces reliance on remote-cloud inference without requiring resource-constrained UEs to execute the complete learning pipeline.

Early-exit neural networks (EENNs) are well-established for adaptive inference, where intermediate classifiers allow sufficiently confident samples to terminate before the final layer, thereby reducing computation~\cite{teerapittayanon2016branchynet,huang2018multiscale,rahmath2024early}. In particular, confidence-threshold exit policies are common in literature. More generally, two- or multiple-threshold decision rules that partition a confidence domain into decision and rejection/undecided regions have been widely used well before the emergence of edge AI~\cite{chow1970optimum,fumera2000reject,bartlett2008classification}. In this paper, we study how early-exit convolutional neural networks (CNNs) and dual-threshold decision rules would interact with shared wireless and processing resources in a secure and cooperative multi-layer edge system.

Recent studies have increasingly coupled deep neural network (DNN) inference with edge-resource management. Liu \emph{et al.}~\cite{liu2023resource} optimize multi-user communication and computation with batching and early exits in a single-ES system, while Kim and Lee~\cite{kim2024early} jointly optimize edge resources and DNN splitting in an edge--cloud setting. Zheng \emph{et al.}~\cite{zheng2025joint} further consider multi-terminal, multi-base station (BS) semantic transmission with communication/computation allocation. Liu \emph{et al.}~\cite{liu2026joint} jointly optimize model partitioning, exit selection, association, bandwidth, and computation in multi-server mobile edge computing (MEC). 
Zhang \emph{et al.}~\cite{zhang2025joint} and Yuan \emph{et al.}~\cite{yuan2025era} further couple service placement and model splitting with resource allocation. These studies demonstrate the value of co-designing DNN inference and edge-resource allocation, but none jointly considers proportional fairness, security-constrained association, and per-UE dual-threshold screening.

Most closely related, Zhou \emph{et al.}~\cite{zhou2025communication} use dual-threshold multi-exit inference for event-triggered offloading between a single device and an edge server. Our FREDI (Fair Resource Alloca-
tion for Edge Dual-Threshold Inference) framework differs primarily at the network level: multiple UEs compete for communication and processing resources across multiple ESs, coupling threshold selection with UE--ES association, bandwidth, transmit power, ES capacity, security eligibility, and proportional fairness. This turns device-level inference and offloading control into network-wide resource orchestration with combinatorial association.

Fairness and security have largely been studied separately. Xu \emph{et al.}~\cite{xu2022distributed} use proportional fairness for distributed DNN-inference assignment and load balancing, while~\cite{vu2024energy} studies energy-based proportional fairness for task offloading and resource allocation in cooperative edge computing, without considering early-exit inference. Security-aware offloading couples execution decisions with protection requirements and resource use~\cite{xiao2021authentication,peng2024scof}. Hybrid quantum--classical models have also been explored for classification and distributed learning~\cite{abbas2021power,ren2025toward,long2025hybrid,fan2023hybrid}. In FREDI, lightweight binary early-exit screening is performed at UEs, whereas heavier hybrid CNN–QNN (quantum neural network) inference~\cite{fan2023hybrid} is performed at ESs.

Table~\ref{tab:related_work_comparison} compares FREDI with representative studies. A checkmark denotes explicit treatment, $\circ$ for partial or related treatment, and ``--'' indicating an absence as a principal component. The comparison highlights differences in system scope and analytical treatment rather than a comparison of the novelty in standard early-exit or threshold mechanisms.

\begin{table*}[!t]
\centering
\caption{Comparison with representative edge-inference and resource-allocation studies.
Columns denote: confidence-based early exiting; two independently chosen thresholds;
more than one device; more than one server; joint optimization of communication \emph{and}
server-side compute; an explicit proportional-fair objective; association restricted by a
security or trust level; a proved structural property of the threshold rule; and a reported
runtime study over increasing network size. Entries record the aspects each work treats as a
principal component; a dash does not imply the aspect is unattainable within that framework.}
\label{tab:related_work_comparison}
\scriptsize
\setlength{\tabcolsep}{3.2pt}
\renewcommand{\arraystretch}{1.18}
\begin{tabular}{lccccccccc}
\toprule
\textbf{Study} &
\shortstack{\textbf{Early}\\\textbf{exit}} &
\shortstack{\textbf{Dual}\\\textbf{threshold}} &
\shortstack{\textbf{Multi-}\\\textbf{UE/device}} &
\shortstack{\textbf{Multi-}\\\textbf{ES/server}} &
\shortstack{\textbf{Joint network}\\\textbf{resource alloc.}} &
\shortstack{\textbf{PF}} &
\shortstack{\textbf{Security}\\\textbf{constraints}} &
\shortstack{\textbf{Threshold}\\\textbf{structure/proof}} &
\shortstack{\textbf{Network}\\\textbf{scalability}} \\
\midrule
Liu \emph{et al.}, JSAC'23~\cite{liu2023resource}
& $\checkmark$ & -- & $\checkmark$ & -- & $\checkmark$ & -- & -- & -- & -- \\
Xu \emph{et al.}, IoTJ'23~\cite{xu2022distributed}
& -- & -- & $\circ$ & $\checkmark$ & $\circ$ & $\checkmark$ & -- & -- & -- \\
Kim and Lee, IoTJ'24~\cite{kim2024early}
& $\checkmark$ & -- & $\circ$ & $\circ$ & $\checkmark$ & -- & -- & -- & -- \\
Zheng \emph{et al.}, TWC'25~\cite{zheng2025joint}
& $\checkmark$ & -- & $\checkmark$ & $\checkmark$ & $\checkmark$ & -- & -- & -- & -- \\
Zhou \emph{et al.}, TCOM'26~\cite{zhou2025communication}
& $\checkmark$ & $\checkmark$ & -- & -- & $\circ$ & -- & -- & $\circ$ & -- \\
Liu \emph{et al.}, CJE'26~\cite{liu2026joint}
& $\checkmark$ & -- & $\checkmark$ & $\checkmark$ & $\checkmark$ & -- & -- & -- & -- \\
\textbf{FREDI (this work)}
& $\checkmark$ & $\checkmark$ & $\checkmark$ & $\checkmark$ & $\checkmark$ & $\checkmark$ & $\checkmark$ & $\checkmark$ & $\checkmark$ \\
\bottomrule
\end{tabular}
\end{table*}


To address these gaps, we develop \textbf{FREDI (Fair Resource Allocation for Edge Dual-Threshold Inference)}. Its novelty lies not in early-exit or dual-threshold inference alone, but in their network-scale formulation, structural analysis, and provably controlled optimization within a secure cooperative multi-layer edge-AI system. The main contributions are:

\begin{itemize}
    \item \textbf{Network-scale cooperative formulation:} We formulate a proportional-fair multi-UE, multi-ES problem that jointly optimizes per-UE dual thresholds, security-constrained UE--ES association, wireless resources, and shared ES processing capacity, thereby capturing competition for heterogeneous communication and computing resources.

    \item \textbf{Structural analysis and optimization guarantees:} We prove that the detected-critical event set is set-monotone non-increasing in both thresholds, enabling exact finite-domain threshold optimization with directional pruning. We further eliminate infeasible UE--ES pairs and exactly project out bandwidth and power variables, yielding a mixed-integer exponential-cone resource-allocation problem with a certified optimality gap. A computable resource--utility response envelope bounds the end-to-end decomposition loss and gives a sufficient condition for global optimality.
    
    \item \textbf{Fairness, security, and scalability evaluation:} Experiments with early-exit MobileNetV2 and ShuffleNetV2 quantify the dual- versus coupled single-parameter threshold trade-off, fairness--utility trade-off, and security-induced resource fragmentation, while testing scalability from $N=6$ to $N=144$ UEs. FREDI achieves near-uniform UE utility with practical computation over the tested network sizes.
\end{itemize}

The rest of this paper is organized as follows. Section~\ref{sec:model_and_problem} presents the system model, performance metrics, and joint optimization problem. Section~\ref{sec:solution:TMC} details FREDI, including its two-stage decomposition and optimality analysis. Section~\ref{sec:performanceevaluation:TMC} reports experimental results on thresholding, fairness, security constraints, and scalability. Section~\ref{sec:conclusion:TMC} concludes the paper.

%% file: 3__system_and_problem.tex
\section{System Model and Problem Formulation}
\label{sec:model_and_problem}

\input{3_1_system_model}
\input{3_2_performance_metrics}
\input{3_3_problem_formulation}

%% file: 3_1_system_model.tex
\subsection{System Model}
\label{sec:system_model}

We consider a cooperative three-layer Edge AI system comprising $N$ local user devices (i.e., surveillance cameras), referred to as UEs and denoted by $\mathbb{N}=\{1,2,\ldots,N\}$; $M$ edge servers (ESs), denoted by $\mathbb{M}=\{1,2,\ldots,M\}$; and one cloud server (CS). The three-layer system entities are shown on the left of Fig.~\ref{fig:integrated_system_inference}, while the right side summarizes the cooperative inference and periodic model-update workflow.
During each scheduling period, UE~$n$ captures a sequence of independent events (or image), denoted by $\Phi_n=\{I_{n1},I_{n2},\ldots,I_{n|\Phi_n|}\}$, where $I_{nk}$ denotes the $k$th event and $|\Phi_n|$ is the number of events. 
Each UE~$n\in\mathbb{N}$ employs a lightweight CNN with dual-threshold early exits to classify locally observed events as \textit{normal} or \textit{critical}. Events detected as normal remain at the UE. For an event detected as critical, the UE securely offloads its extracted feature representation to one associated ES. The selected ES executes a hybrid CNN--QNN model for detailed multi-class classification and returns the resulting label or response action.
The CS collects training information or model updates from the ESs, updates the classical CNN, feature-adapter, and QNN parameters, and distributes the updated hybrid models back to the ESs. The model parameters are treated as fixed during each scheduling and inference period considered by the optimization.

Let $\mathbb{S}=\{1,\ldots,S\}$ denote the set of security levels assigned to UEs, ESs, and the CS, where a larger index represents a stronger security level. Each UE~$n$ serves one application with a required security level $s_n^{\mathrm{UE}}\in\mathbb{S}$; consequently, all events generated by that UE share the same security requirement. Their critical-event features can therefore be offloaded only to an ES~$j$ satisfying $s_j^{\mathrm{ES}}\geq s_n^{\mathrm{UE}}$. We assume that the CS satisfies the required security and privacy protections.

\begin{figure}[!t]
    \centering
    \includegraphics[width=0.98\columnwidth]
    {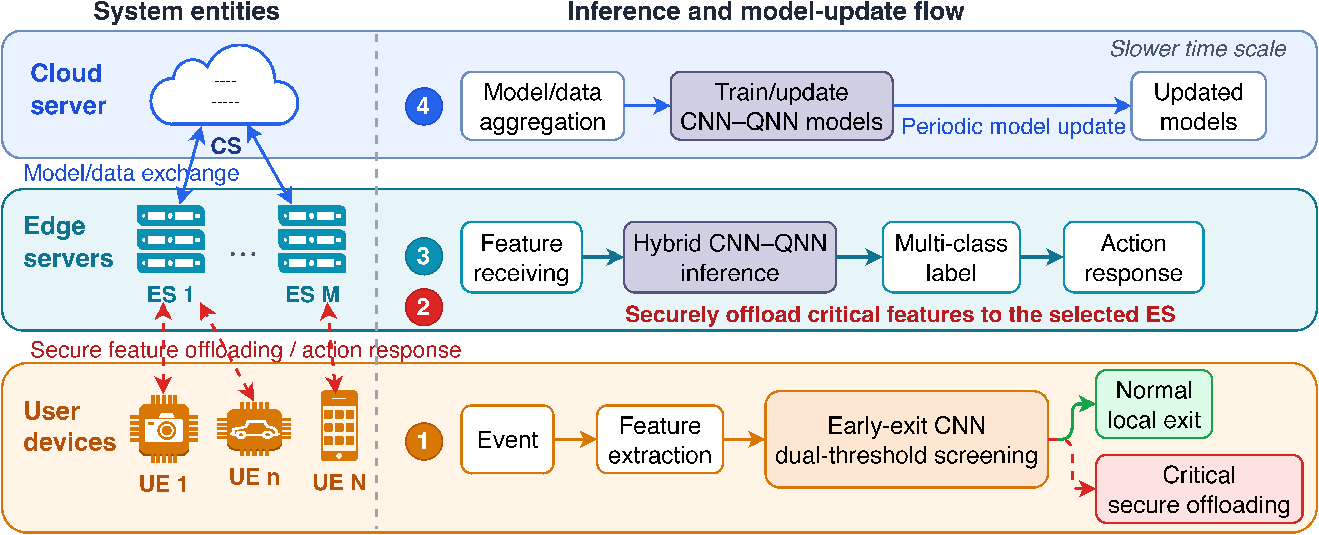}
    \caption{Three-layer Edge-AI system entities (left) and cooperative inference and periodic model-update workflow (right).}
    \label{fig:integrated_system_inference}
\end{figure}

\subsubsection{Early-Exit Inference at UEs}

The UE-side CNN consists of convolutional feature extractors, pooling or subsampling layers, nonlinear activations, and lightweight classification heads~\cite{patil2020convolutional}. Because events vary in classification difficulty, we adopt early-exit inference~\cite{teerapittayanon2016branchynet,huang2018multiscale,rahmath2024early} with a two-threshold confidence rule~\cite{chow1970optimum,fumera2000reject}. At each exit, an event is classified as critical if its confidence reaches the upper threshold, as normal if it reaches the lower threshold, and otherwise proceeds to a deeper exit. The UE-specific thresholds are denoted by $(\alpha_n^l,\alpha_n^u)$, with $0\leq\alpha_n^l\leq\alpha_n^u\leq1$. Fig.~\ref{fig:CNN_with_early_exit} illustrates this policy. Our focus is the coupling of threshold-controlled inference loads with secure multi-UE/multi-ES resource orchestration, rather than the threshold rule itself.

\begin{figure}[!t]
	\centering
	\includegraphics[width=0.75\columnwidth]{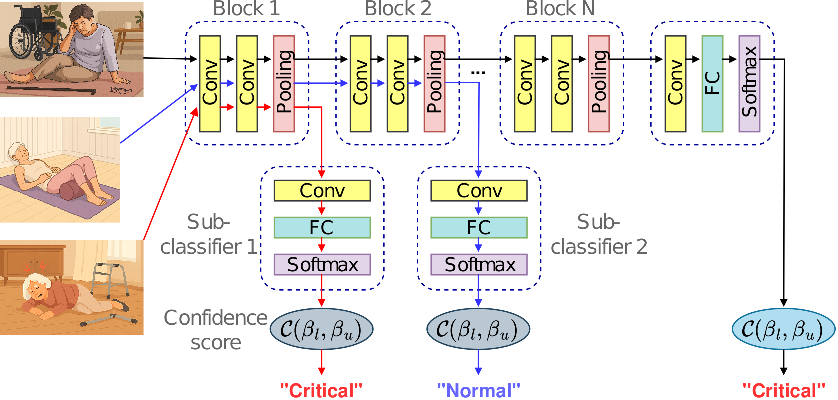}	
	\caption{CNN with dual-threshold early exits.}
	\label{fig:CNN_with_early_exit}
\end{figure}


\textbf{Confidence Score.}
Let $L$ be the number of layers of the lightweight CNN at UE~$n$.
Given the extracted feature of event~$I_{nk}\in\Phi_n$ as input, the CNN generates a confidence score at each layer to assess whether the event is critical. The confidence score $C_{nq}^{(k)}$ at layer~$q$ for event~$I_{nk}$ is defined as

\begin{equation}
    C_{nq}^{(k)}
    =
    \frac{e^{f_{nq}^{(k),\mathrm{critical}}}}
    {e^{f_{nq}^{(k),\mathrm{critical}}}+e^{f_{nq}^{(k),\mathrm{normal}}}},
    \label{eq:confidence_score}
\end{equation}
where $f_{nq}^{(k),\mathrm{critical}}$ and $f_{nq}^{(k),\mathrm{normal}}$ are the critical- and normal-class logits, respectively, at layer~$q$ for event~$I_{nk}$.

\textbf{Early-Exit Event Detection.}
The CNN classifies each event as \textit{critical} ($1$) or \textit{normal} ($0$) as early as possible. For $\alpha_n^l<\alpha_n^u$, processing stops at the first exit~$q$ whose confidence leaves the undecided interval $\mathcal I_n\triangleq(\alpha_n^l,\alpha_n^u)$: event~$I_{nk}$ is declared normal if $C_{nq}^{(k)}\leq\alpha_n^l$ and critical if $C_{nq}^{(k)}\geq\alpha_n^u$. If all exits remain within the interval, the event is conservatively declared normal at the final exit. When $\alpha_n^l=\alpha_n^u$, the policy reduces to a single decision at the first exit, with equality assigned to the normal class.

Accordingly, the predicted label $\hat{y}_{nk}\in\{1,0\}$ is

\begin{equation}
\hat{y}_{nk} =
\begin{cases}
0, \;
    \exists\,q\leq L:\ C_{nq}^{(k)}\leq\alpha_n^l,\ 
    C_{nt}^{(k)}\in\mathcal I_n,\ \forall t<q, \\

1, \;
    \exists\,q\leq L:\ C_{nq}^{(k)}\geq\alpha_n^u>\alpha_n^l,\ 
    C_{nt}^{(k)}\in\mathcal I_n,\ \forall t<q, \\

0, \;
    C_{nt}^{(k)}\in\mathcal I_n,\ \forall t\leq L, \\

\mathds{1}
    \left\{C_{n1}^{(k)}>\alpha_n^l\right\}, 
    \alpha_n^l=\alpha_n^u.
\end{cases}
\label{eq:label_function}
\end{equation}
where $\mathds{1}\{\cdot\}$ is the indicator function. The four cases are mutually exclusive: when $\alpha_n^l<\alpha_n^u$ the last case is inactive, and when $\alpha_n^l=\alpha_n^u$ the interval $\mathcal I_n$ is empty, so only the first exit is examined and confidence exactly equal to the common threshold is assigned to the normal class.

\subsubsection{Secure Communication Between UEs and ESs}

We employ FDMA for UE--ES uplink communication. Let $\boldsymbol{b}=\{b_{nj}\}\in\mathbb{R}^{N\times M}$ and $\boldsymbol{p}=\{p_{nj}\}\in\mathbb{R}^{N\times M}$ denote the bandwidth and transmit power allocated from UE~$n$ to ES~$j$, with per-link limits $b^{\max}$ and $p^{\max}$. Each UE is associated with one ES, represented by $\mathbf{x}=\{x_{nj}\}\in\{0,1\}^{N\times M}$, where $x_{nj}=1$ denotes association between UE~$n$ and ES~$j$.

Let $D_n$ denote the representative size, in bits, of the feature data offloaded by UE~$n$ for one event. The uplink rate from UE~$n$ to ES~$j$ is
\begin{equation}
r_{nj}(b_{nj},p_{nj})
=
b_{nj}\log_2\!\left(
1+\frac{g_{nj}p_{nj}}{\sigma_{nj}^{2}b_{nj}}
\right),
\label{eq:uplink_rate}
\end{equation}
where $g_{nj}$ and $\sigma_{nj}^{2}$ are the channel gain and noise power spectral density of the legitimate link, respectively.

To model physical-layer confidentiality, we consider an eavesdropper that attempts to intercept transmissions from UE~$n$. Its achievable rate on the bandwidth allocated to link $(n,j)$ is
\begin{equation}
r_{nj}^{\mathrm{EV}}(b_{nj},p_{nj})
=
b_{nj}\log_2\!\left(
1+\frac{g_{n}^{\mathrm{EV}}p_{nj}}
{(\sigma_{n}^{\mathrm{EV}})^{2}b_{nj}}
\right),
\label{eq:eavesdropper_rate}
\end{equation}
where $g_n^{\mathrm{EV}}$ and $(\sigma_n^{\mathrm{EV}})^2$ denote the corresponding eavesdropper-channel gain and noise power spectral density. The secure transmission rate~\cite{wyner1975wire} is
\begin{equation}
r_{nj}^{\mathrm{se}}(b_{nj},p_{nj})
=
\left[
r_{nj}(b_{nj},p_{nj})
-r_{nj}^{\mathrm{EV}}(b_{nj},p_{nj})
\right]^+,
\label{eq:secure_rate}
\end{equation}
where $[z]^+=\max\{0,z\}$. Consistent with the FDMA model, transmissions are orthogonal in frequency, so~\eqref{eq:uplink_rate} and~\eqref{eq:eavesdropper_rate} contain no inter-user or inter-cell interference term; the eavesdropper channel statistics $g_n^{\mathrm{EV}}$ and $(\sigma_n^{\mathrm{EV}})^2$ are assumed known at the scheduler, which is the standard worst-case assumption in secrecy-rate resource allocation and yields a conservative feasible set. The selected secure rate of UE~$n$ is
\begin{equation}
r_n^{\mathrm{se}}
(\boldsymbol{b}_n,\boldsymbol{p}_n,\boldsymbol{x}_n)
=
\sum_{j=1}^{M}x_{nj}
 r_{nj}^{\mathrm{se}}(b_{nj},p_{nj}).
\label{eq:selected_secure_rate}
\end{equation}
For a positive selected secure rate, the secure uplink transmission delay is
\begin{equation}
t_n^{\mathrm{tx}}
(\boldsymbol{b}_n,\boldsymbol{p}_n,\boldsymbol{x}_n)
=
\frac{D_n}
{r_n^{\mathrm{se}}
(\boldsymbol{b}_n,\boldsymbol{p}_n,\boldsymbol{x}_n)}.
\label{eq:transmission_time}
\end{equation}

Let $t_n^{\mathrm{r}}$ denote the maximum allowable uplink transmission delay of UE~$n$.

The real-time latency model considers only the secure uplink transmission delay from UEs to ESs. The execution and queueing delays associated with hybrid-model inference at the ESs, as well as the comparatively slower CS model-update delay, are beyond the scope of this work.

\subsubsection{Hybrid CNN--QNN Multi-Class Inference at ESs}

ES~$j$ is characterized by the tuple $(B_j,W_j,s_j^{\mathrm{ES}})$, where $B_j$ is its available UE-uplink bandwidth, $W_j$ is its aggregate event-processing capacity over the considered scheduling period, and $s_j^{\mathrm{ES}}\in\mathbb{S}$ is its security level. Each ES hosts a hybrid CNN--QNN classifier that processes the received representation and produces the final multi-class prediction. We measure $W_j$ in event-processing units, with one unit corresponding to the workload required to process one offloaded event through the complete hybrid inference pipeline, assuming approximately homogeneous per-event workloads.

If UE~$n$ is associated with ES~$j$, the system allocates uplink bandwidth $b_{nj}$ and reserves event-processing capacity $w_{nj}$ at ES~$j$. Let $w^{\max}$ denote the maximum processing capacity that an ES can reserve for one UE during a scheduling period.

When an event is detected as critical at UE~$n$, its feature representation is offloaded to the selected ES for multi-class classification and the corresponding response. Rather than explicitly modeling ES execution time, processing feasibility is captured by the aggregate capacity $W_j$ and per-UE allocation $w_{nj}$; the optimization is therefore agnostic to the ES classifier. We adopt the hybrid CNN--QNN instantiation as a forward-looking design point for when variational quantum classifiers become practical, and it is not evaluated in this paper.

\subsubsection{Periodic Model Aggregation and Updating at the CS}
Periodically, the ESs transmit approved training information, model parameters, or feature summaries to the CS over secure backhaul links, as illustrated by the upper model-update path on the right of Fig.~\ref{fig:integrated_system_inference}. The CS aggregates this information, updates the hybrid CNN--QNN model, and redistributes the updated model to the ESs for subsequent deployment. Cloud-training delay and backhaul-resource allocation are outside the optimization scope.

%% file: 3_2_performance_metrics.tex

\subsection{Performance Metrics}
\label{sec:performance_metrics}

\subsubsection{Categories of Input Events and Output Results}

For each event~$I_{nk}\in\Phi_n$, UE~$n$ uses its lightweight CNN to classify the event as \textit{critical} or \textit{normal}. Let $y_{nk}$ and $\hat{y}_{nk}$ denote the ground-truth and predicted labels of event~$I_{nk}$, respectively. The event is correctly classified if $\hat{y}_{nk}=y_{nk}$; the four possible outcomes are the usual true/false positive/negative combinations, summarized as event sets in Table~\ref{tab:events_categories}.

Over a given time period, UE~$n$ captures the event set $\Phi_n=\{I_{n1},\ldots,I_{n|\Phi_n|}\}$. 
Let $\Phi_n^{\mathrm{N}}$ and $\Phi_n^{\mathrm{P}}$ denote the sets of normal and critical events at UE~$n$, respectively. Let $\hat{\Phi}_n^{\mathrm{TP}}$, $\hat{\Phi}_n^{\mathrm{FP}}$, $\hat{\Phi}_n^{\mathrm{TN}}$, and $\hat{\Phi}_n^{\mathrm{FN}}$ denote the sets of true-positive (TP), false-positive (FP), true-negative (TN), and false-negative (FN) events, respectively, as summarized in Table~\ref{tab:events_categories}.

During the considered scheduling period, every event predicted as critical is offloaded to an ES. We therefore define the offloaded load of UE~$n$ as
\begin{equation}
L_n(\alpha_n^l,\alpha_n^u)
\triangleq
|\hat{\Phi}_n^{\text{TP}}|
+
|\hat{\Phi}_n^{\text{FP}}|.
\label{eq:offloaded_load}
\end{equation}
Using the association and processing-allocation variables, define the event-processing capacity allocated to UE~$n$ as
\begin{equation}
R_n
\triangleq
\sum_{j=1}^{M}x_{nj}w_{nj}.
\label{eq:allocated_processing_capacity}
\end{equation}
Because $w_{nj}$ is measured in event-processing units over the same period, the allocated capacity must cover the offloaded load but need not exceed the number of events generated by the UE. These processing QoS requirements are specified in Problem~$\mathbf{P}_0$.
\begin{table}[!t]
\caption{Categories of events at UE~$n$.}
\label{tab:events_categories}
\centering
\footnotesize
\setlength{\tabcolsep}{7.0pt}
\renewcommand{\arraystretch}{1.05}
\begin{tabular}{cll}
\toprule
\textbf{Set} & \textbf{Definition} & \textbf{Type} \\
\midrule
$\Phi_n$                     & All events at UE~$n$               & Input  \\
$\Phi_n^{\mathrm{N}}$        & Normal events at UE~$n$            & Input  \\
$\Phi_n^{\mathrm{P}}$        & Critical events at UE~$n$          & Input  \\
$\hat{\Phi}_n^{\mathrm{TP}}$ & Correctly detected critical events & Output \\
$\hat{\Phi}_n^{\mathrm{FP}}$ & Normal events detected as critical & Output \\
$\hat{\Phi}_n^{\mathrm{TN}}$ & Correctly detected normal events   & Output \\
$\hat{\Phi}_n^{\mathrm{FN}}$ & Critical events detected as normal & Output \\
\bottomrule
\end{tabular}
\end{table}

We then define the performance metrics that are influenced by the use of dual confidence thresholds.

\subsubsection{User Utility Functions}
Because correctly detecting critical events is the primary objective of the system, we define the utility of UE~$n$, denoted by $\mathcal{U}_n(\alpha_n^l,\alpha_n^u)$, as the proportion of critical events correctly classified as critical. This quantity is the true-positive rate (TPR). From Table~\ref{tab:events_categories} we have
\begin{equation}
    \mathcal{U}_{n}(\alpha_n^l, \alpha_n^u) = \frac{|\hat{\Phi}_n^{\text{TP}}|}{|\Phi_n^{\text{P}}|}.
\label{eq:user_utility}
\end{equation}

The communication and edge-processing resources required by UE~$n$ increase with its offloaded load $L_n(\alpha_n^l,\alpha_n^u)$. We therefore optimize the UE utilities subject to the wireless and processing resources available at the ESs.

%% file: 3_3_problem_formulation.tex
\subsection{Problem Formulation}
\label{sec:problem_formulation}

Events arrive sequentially at the UEs, and only those predicted as critical are offloaded to ESs for hybrid CNN--QNN multi-class classification. Over one scheduling period, the joint design contains two coupled decision groups that motivate FREDI: (i) fair UE--ES association and wireless/edge resource allocation through $(\mathbf{x},\boldsymbol{b},\boldsymbol{p},\boldsymbol{w})$, and (ii) early-exit CNN inference control through the lower and upper confidence thresholds $(\boldsymbol{\alpha}^{l},\boldsymbol{\alpha}^{u})$. The CNN and QNN model parameters distributed by the CS are fixed during this period; periodic cloud training is therefore outside $\mathbf{P}_0$.

Maximizing aggregate utility alone may allocate few communication and processing resources to UEs whose utility gain per unit resource is comparatively small. FREDI therefore associates \emph{fairness with the resource-allocation layer}: we adopt weighted proportional fairness across UEs, while the dual thresholds remain inference-control variables. We assume $\mathcal{U}_n(\alpha_n^l,\alpha_n^u)>0$ for every UE~$n$. 
Weighted proportional fairness is induced by maximizing the standard network utility $\sum_{n=1}^{N}\rho_n\ln(\mathcal{U}_n)$, where $\rho_n>0$ is the fairness weight of UE~$n$~\cite{kelly1998rate}.

The end-to-end problem nevertheless optimizes both groups jointly: the dual confidence thresholds $\boldsymbol{\alpha}^{l}$ and $\boldsymbol{\alpha}^{u}$ determine early-exit CNN decisions and offloaded load, whereas $\mathbf{x}$, $\boldsymbol{b}$, $\boldsymbol{p}$, and $\boldsymbol{w}$ determine fair secure access to wireless and ES processing resources. The resulting problem is

\begin{subequations}
\label{eq:prob_0}

\begin{align}
\mathbf{P}_0:\quad
\underset{
\boldsymbol{\alpha}^{l},\boldsymbol{\alpha}^{u},
\mathbf{x},\boldsymbol{b},\boldsymbol{p},\boldsymbol{w}
}{
\operatorname{maximize}
}
\quad
\sum_{n=1}^{N}\rho_n
\ln\!\left(
\mathcal{U}_n(\alpha_n^l,\alpha_n^u)
\right)
\tag{\ref{eq:prob_0}}
\end{align}

\noindent\textnormal{subject to}

\begin{align}
&0\leq b_{nj}\leq b^{\max},
&&\forall (n,j)\in\mathbb{N}\times\mathbb{M},
\label{eq:cons_band_user_0}
\\
&0\leq p_{nj}\leq p^{\max},
&&\forall (n,j)\in\mathbb{N}\times\mathbb{M},
\label{eq:cons_pow_user_0}
\\
&0\leq w_{nj}\leq w^{\max},
&&\forall (n,j)\in\mathbb{N}\times\mathbb{M},
\label{eq:cons_processing_user_0}
\\
&\sum_{n=1}^{N}x_{nj}b_{nj}\leq B_j,
&&\forall j\in\mathbb{M},
\label{eq:cons_band_edge_0}
\\
&\sum_{n=1}^{N}x_{nj}w_{nj}\leq W_j,
&&\forall j\in\mathbb{M},
\label{eq:cons_comp_edge_0}
\\
&\sum_{j=1}^{M}x_{nj}s_j^{\mathrm{ES}}
\geq s_n^{\mathrm{UE}},
&&\forall n\in\mathbb{N},
\label{eq:cons_security_user_0}
\\
&t_n^{\mathrm{tx}}
\left(
\boldsymbol{b}_n,
\boldsymbol{p}_n,
\boldsymbol{x}_n
\right)
\leq t_n^{\mathrm{r}},
&&\forall n\in\mathbb{N},
\label{eq:cons_delay_user_0}
\\
&R_n
\leq |\Phi_n|,
&&\forall n\in\mathbb{N},
\label{eq:cons_comp_demand_user_0}
\\
&\sum_{j=1}^{M}x_{nj}=1,
&&\forall n\in\mathbb{N},
\label{eq:cons_conn_user_0}
\\
&x_{nj}\in\{0,1\},
&&\forall (n,j)\in\mathbb{N}\times\mathbb{M},
\label{eq:cons_binary_variables_0}
\\
&L_n(\alpha_n^l,\alpha_n^u)
\leq
R_n,
&&\forall n\in\mathbb{N},
\label{eq:cons_comp_user_0}
\\
&0\leq\alpha_n^l\leq\alpha_n^u\leq1,
&&\forall n\in\mathbb{N}.
\label{eq:cons_dual_thresholds_variables_0}
\end{align}

\end{subequations}

Constraints~\eqref{eq:cons_band_user_0}--\eqref{eq:cons_processing_user_0} specify the per-link bandwidth, transmission-power, and event-processing-reservation limits, respectively. Constraint~\eqref{eq:cons_conn_user_0} associates each UE with exactly one ES. Constraints~\eqref{eq:cons_band_edge_0} and~\eqref{eq:cons_comp_edge_0} enforce the aggregate bandwidth and event-processing capacities of each ES. Constraint~\eqref{eq:cons_security_user_0} ensures that a UE is associated only with an ES whose security level is sufficiently strong, while~\eqref{eq:cons_delay_user_0} bounds its secure uplink delay. Finally, constraints~\eqref{eq:cons_comp_user_0} and~\eqref{eq:cons_comp_demand_user_0} enforce the processing QoS chain $L_n(\alpha_n^l,\alpha_n^u)\leq R_n\leq|\Phi_n|$.

The coupling between the resource-allocation variables and the dual-threshold load constraint~\eqref{eq:cons_comp_user_0} is the central structure exploited by FREDI in Section~\ref{sec:solution:TMC}.

%% file: 4__proposed_solution.tex
\section{The Proposed FREDI Framework}
\label{sec:solution:TMC}

In this section, we introduce the proposed method, Fair Resource Allocation for Edge Dual-Threshold Inference (FREDI). Its two stages deliberately assign proportional fairness to multi-UE wireless/edge resource allocation and dual-threshold optimization to early-exit CNN inference.

\input{4_1_problem_characterization}

\input{4_2_decomposition}
\input{4_3_subproblem_a}
\input{4_4_subproblem_b}

%% file: 4_1_problem_characterization.tex
\subsection{Problem Characterization}

Problem~$\mathbf{P}_0$ is a nonlinear mixed-integer problem: it contains binary association
variables $x_{nj}$, empirical and generally non-differentiable utilities
$\mathcal{U}_n(\alpha_n^l,\alpha_n^u)$, and products coupling the association decisions to the
continuous allocations $(b_{nj},p_{nj},w_{nj})$. FREDI exploits this structure through two
complementary stages. Stage~A performs proportional-fair UE--ES association and wireless/edge resource allocation and is transformed exactly into a mixed-integer exponential-cone problem. Stage~B then performs dual-threshold inference optimization for each UE over the finite set of observed confidence scores under the processing capacity allocated by Stage~A. Hence, fairness is governed by Stage~A, whereas inference and offloading decisions are controlled by Stage~B. The following analysis distinguishes the exactness of each stage from the end-to-end loss induced by their sequential composition.

\begin{lemma}[Set-monotonicity of dual-threshold early exiting]
\label{lem:set_decreasing_monotonic}
For fixed CNN confidence traces, let $\hat{\Phi}^{\mathrm{critical}}(\alpha^l,\alpha^u)$ denote the events classified as critical under thresholds $0\leq\alpha^l\leq\alpha^u\leq1$. For any two admissible threshold pairs satisfying $\alpha^{l\prime}\geq\alpha^l$ and $\alpha^{u\prime}\geq\alpha^u$,
\begin{equation}
\hat{\Phi}^{\mathrm{critical}}(\alpha^{l\prime},\alpha^{u\prime})
\subseteq
\hat{\Phi}^{\mathrm{critical}}(\alpha^l,\alpha^u).
\label{eq:critical_set_monotonicity}
\end{equation}
Consequently, $|\hat{\Phi}^{\mathrm{critical}}|$ is monotonically non-increasing in either threshold over the admissible domain.
\end{lemma}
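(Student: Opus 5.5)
The plan is to show directly that any event declared critical under $(\alpha^{l\prime},\alpha^{u\prime})$ is also declared critical under $(\alpha^l,\alpha^u)$. I will fix one event $I_{nk}$ with confidence trace $C_1,\dots,C_L$ (dropping indices) and use only the label rule~\eqref{eq:label_function}. The cardinality claim then follows immediately, since set inclusion gives $|\hat{\Phi}^{\mathrm{critical}}(\alpha^{l\prime},\alpha^{u\prime})|\leq|\hat{\Phi}^{\mathrm{critical}}(\alpha^l,\alpha^u)|$. Monotonicity in one threshold alone is the special case in which the other threshold is held fixed; this is admissible provided $\alpha^l\leq\alpha^u$ still holds after the change.

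\textbf{Main case, $\alpha^l<\alpha^u$ and $\alpha^{l\prime}<\alpha^{u\prime}$.} Under the primed thresholds, criticality means there is an exit $q'$ with $C_{q'}\geq\alpha^{u\prime}$ and $\alpha^{l\prime}<C_t<\alpha^{u\prime}$ for all $t<q'$. I will examine the unprimed run at exits $t\leq q'$.
\begin{itemize}
\item For $t<q'$ we have $C_t>\alpha^{l\prime}\geq\alpha^l$, so $C_t$ lies either in $(\alpha^l,\alpha^u)$ or in $[\alpha^u,1]$.
\item At $t=q'$ we have $C_{q'}\geq\alpha^{u\prime}\geq\alpha^u$.
\end{itemize}
Hence no exit $t\leq q'$ has $C_t\leq\alpha^l$. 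The unprimed run therefore stops at the first $t^\star\leq q'$ with $C_{t^\star}\notin(\alpha^l,\alpha^u)$, and at that exit $C_{t^\star}\geq\alpha^u$. So the event is labelled critical with $\hat y=1$ under $(\alpha^l,\alpha^u)$. The key point is that raising the lower threshold can only remove downward exits, and raising the upper threshold can only postpone upward exits.

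\textbf{Degenerate cases.} These use the convention in~\eqref{eq:label_function} that equal thresholds mean a single decision at exit~1, with ties going to normal.
\begin{itemize}
\item If $\alpha^{l\prime}=\alpha^{u\prime}$, primed criticality means $C_1>\alpha^{u\prime}\geq\alpha^u$. If $\alpha^l<\alpha^u$, the unprimed run exits critical at $q=1$. If $\alpha^l=\alpha^u$, then $C_1>\alpha^l$ also gives critical.
\item If $\alpha^l=\alpha^u$ but $\alpha^{l\prime}<\alpha^{u\prime}$, primed criticality forces $C_1>\alpha^{l\prime}$. This holds whether the event exits at $q'=1$ with $C_1\geq\alpha^{u\prime}>\alpha^{l\prime}$ or at a later exit, which requires $C_1\in(\alpha^{l\prime},\alpha^{u\prime})$. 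Hence $C_1>\alpha^{l\prime}\geq\alpha^l$, and the unprimed rule gives critical.
\end{itemize}

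\textbf{Expected obstacle.} I expect the main difficulty to be bookkeeping at the boundaries rather than any real mathematical step. This covers the strict versus non-strict inequalities, the fallback-to-normal case when all exits remain undecided, and the special tie-breaking for equal thresholds. I will handle these by first rewriting the rule uniformly as: critical iff the first exit leaving $(\alpha^l,\alpha^u)$ exists and exits upward, with the equal-threshold case using ``upward'' to mean $C_1>\alpha^l$. I will check each inequality against this single form.
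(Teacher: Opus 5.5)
Your proof is correct and follows the same route as the paper's: fix an event that is critical under the larger pair, note that every earlier confidence exceeds $\alpha^{l\prime}\geq\alpha^l$ and the exiting confidence is at least $\alpha^{u\prime}\geq\alpha^u$, so the unprimed run must exit upward no later. Your explicit check of the mixed equal/unequal threshold cases spells out what the paper covers in a single clause about the equal-threshold convention.
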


\emph{Proof:} The proof is shown in Appendix~\ref{sec:set_decreasing_monotonic}.

%% file: 4_2_decomposition.tex
\subsection{Fair Resource Allocation for Edge Dual-Threshold Inference}

The empirical utility depends directly on the inference thresholds $(\alpha_n^l,\alpha_n^u)$, whereas the wireless and processing allocations determine which threshold pairs are feasible through the offloaded-load constraint~\eqref{eq:cons_comp_user_0}. Recall from~\eqref{eq:allocated_processing_capacity} that $R_n=\sum_{j=1}^{M}x_{nj}w_{nj}$ is the processing capacity allocated to UE~$n$.

To expose the resource--inference coupling without assuming that the reserved capacity is fully utilized, let $\xi>0$ denote a generic processing-capacity level and define the optimal Stage-B response of UE~$n$ as
\begin{equation}
V_n(\xi)
\triangleq
\max_{\substack{0\leq\alpha_n^l\leq\alpha_n^u\leq1\\
L_n(\alpha_n^l,\alpha_n^u)\leq \xi}}
\mathcal{U}_n(\alpha_n^l,\alpha_n^u).
\label{eq:stage_b_value_function}
\end{equation}
Only capacity levels for which the feasible set contains a positive-utility threshold pair are considered, consistently with the logarithmic domain assumed in~$\mathbf P_0$. Define the corresponding empirical response factor as
\begin{equation}
\kappa_n(\xi)
\triangleq
\frac{V_n(\xi)}{\xi/|\Phi_n|}.
\label{eq:resource_utility_response_factor}
\end{equation}
Consequently, the optimized end-to-end objective associated with a feasible resource allocation admits the exact decomposition
\begin{equation}
\sum_{n=1}^{N}\rho_n\ln V_n(R_n)
=
\underbrace{\sum_{n=1}^{N}\rho_n
\ln\!\left(\frac{R_n}{|\Phi_n|}\right)}_{A(\mathbf R)}
+
\underbrace{\sum_{n=1}^{N}\rho_n\ln\kappa_n(R_n)}_{K(\mathbf R)}.
\label{eq:exact_objective_decomposition}
\end{equation}
FREDI optimizes the processing-coverage term $A(\mathbf R)$ in Stage~A and then realizes the best empirical inference response $V_n(R_n)$ in Stage~B. Hence, the only end-to-end approximation introduced by the decomposition is that Stage~A does not optimize the response term $K(\mathbf R)$; this term characterizes the departure of the empirical inference response from exact proportionality to the normalized reserved capacity.

\textbf{FREDI Stage A: Fair UE--ES association and resource allocation.}
The variables $(\mathbf{x},\boldsymbol{b},\boldsymbol{p},\boldsymbol{w})$ are optimized to allocate secure communication resources and event-processing capacity proportionally across UEs. The Stage-A problem is
\begin{subequations}
\label{eq:prob_a}
\begin{align}
\mathbf{P}_A:\quad
\underset{\mathbf{x},\boldsymbol{b},\boldsymbol{p},\boldsymbol{w}}
{\operatorname{maximize}}
\quad&
\sum_{n=1}^{N}\rho_n
\ln\!\left(
\frac{\sum_{j=1}^{M}x_{nj}w_{nj}}
{|\Phi_n|}
\right)
\tag{\ref{eq:prob_a}}
\end{align}
\noindent\textnormal{subject to constraints~\eqref{eq:cons_band_user_0}--\eqref{eq:cons_delay_user_0} and~\eqref{eq:cons_comp_demand_user_0}--\eqref{eq:cons_binary_variables_0}.}
\end{subequations}

Let $(\mathbf{x}^{*},\boldsymbol{b}^{*},\boldsymbol{p}^{*},\boldsymbol{w}^{*})$ denote a globally optimal Stage-A allocation and $R_n^{*}=\sum_{j=1}^{M}x_{nj}^{*}w_{nj}^{*}$. Since $|\Phi_n|$ is fixed within an optimization instance, $\sum_n\rho_n\ln(R_n/|\Phi_n|)=\sum_n\rho_n\ln(R_n)-\sum_n\rho_n\ln|\Phi_n|$; therefore, the normalized and unnormalized Stage-A objectives have identical maximizers. The normalization is retained to interpret $R_n/|\Phi_n|$ as the fraction of the UE workload for which processing capacity is reserved.

\textbf{FREDI Stage B: Dual-threshold inference optimization.}
With the Stage-A allocation fixed, the thresholds are selected by
\begin{align}
\mathbf{P}_B:\quad
&\underset{\boldsymbol{\alpha}^{l},\boldsymbol{\alpha}^{u}}
{\operatorname{maximize}}
&&\sum_{n=1}^{N}\rho_n
\ln\!\left(\mathcal{U}_n(\alpha_n^l,\alpha_n^u)\right)
\label{eq:prob_b}
\end{align}
\noindent\textnormal{subject to the threshold-domain constraint~\eqref{eq:cons_dual_thresholds_variables_0} and the per-UE capacity constraints}
\begin{equation}
L_n(\alpha_n^l,\alpha_n^u)\leq R_n^{*},
\qquad \forall n\in\mathbb{N}.
\label{eq:cons_comp_user_sub_b}
\end{equation}
Problem~$\mathbf P_B$ is separable across UEs and attains $V_n(R_n^*)$ for every UE when each per-UE empirical search is solved exactly.

Let $F_0^{\star}$ denote the globally optimal objective value of the empirical problem~$\mathbf P_0$, and define $F_0^{\mathrm{FREDI}}\triangleq\sum_n\rho_n\ln V_n(R_n^*)$ as the objective obtained by globally solving~$\mathbf P_A$ followed by the exact Stage-B searches.

\begin{theorem}[End-to-end global performance bound]
\label{theo:fredi_global_bound}
Suppose that the Stage-A solution admits a positive-utility Stage-B solution for every UE. Let $\mathcal Q_n^{+}$ denote the finite set of positive-utility threshold candidates for UE~$n$, with $L_n(\alpha_n^l,\alpha_n^u)>0$, and define the computable response envelope
\begin{equation}
\overline{\kappa}_n
\triangleq
\max_{(\alpha_n^l,\alpha_n^u)\in\mathcal Q_n^{+}}
\frac{|\Phi_n|\mathcal U_n(\alpha_n^l,\alpha_n^u)}
{L_n(\alpha_n^l,\alpha_n^u)}.
\label{eq:kappa_upper_envelope}
\end{equation}
Thus, $\overline{\kappa}_n$ is obtained during the same finite candidate enumeration used to solve Stage~B and requires no solution of~$\mathbf P_0$.
Then
\begin{equation}
0\leq
F_0^{\star}-F_0^{\mathrm{FREDI}}
\leq
\sum_{n=1}^{N}\rho_n
\ln\!\left(
\frac{\overline{\kappa}_n}
{\kappa_n(R_n^*)}
\right).
\label{eq:fredi_global_suboptimality_bound}
\end{equation}
\end{theorem}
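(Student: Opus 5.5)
The lower bound is immediate once we check that FREDI produces a feasible point of $\mathbf P_0$. A globally optimal Stage-A allocation $(\mathbf x^*,\boldsymbol b^*,\boldsymbol p^*,\boldsymbol w^*)$ satisfies every constraint of $\mathbf P_0$ except \eqref{eq:cons_comp_user_0} and \eqref{eq:cons_dual_thresholds_variables_0}. The exact Stage-B thresholds satisfy these two constraints with $R_n=R_n^*$ by \eqref{eq:cons_comp_user_sub_b}. Hence the combined point is feasible for $\mathbf P_0$, and its objective is $F_0^{\mathrm{FREDI}}$. Since $F_0^\star$ is the global maximum, $F_0^{\mathrm{FREDI}}\le F_0^\star$.

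For the upper bound, fix a globally optimal solution of $\mathbf P_0$ with thresholds $(\alpha_n^{l\star},\alpha_n^{u\star})$ and allocation with capacities $R_n^\star$. We may assume $\mathcal U_n(\alpha_n^{l\star},\alpha_n^{u\star})>0$, since otherwise the log objective is $-\infty$.

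First we need $L_n^\star\triangleq L_n(\alpha_n^{l\star},\alpha_n^{u\star})>0$. A positive TPR requires at least one true positive, so at least one offloaded event, hence $L_n^\star\ge 1$. The pair $(\alpha_n^{l\star},\alpha_n^{u\star})$ may not itself lie in the finite candidate set $\mathcal Q_n^+$. I would handle this with the finite-domain argument: the label \eqref{eq:label_function} depends on the thresholds only through their position relative to the finitely many observed confidences $C_{nq}^{(k)}$. So there is a candidate pair in $\mathcal Q_n^+$ that induces the same critical set, with the same $\mathcal U_n$ and the same $L_n$. This is the step I expect to need the most care, since it must match exactly how $\mathcal Q_n^+$ is built from observed scores, including the equality and tie conventions. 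Given it, $|\Phi_n|\,\mathcal U_n(\alpha_n^{l\star},\alpha_n^{u\star})\le \overline\kappa_n L_n^\star\le\overline\kappa_n R_n^\star$, where the last step uses \eqref{eq:cons_comp_user_0}. Therefore
\begin{equation*}
F_0^\star\le \sum_{n}\rho_n\ln\frac{R_n^\star}{|\Phi_n|}+\sum_n\rho_n\ln\overline\kappa_n .
\end{equation*}

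Next, the optimal allocation of $\mathbf P_0$ is feasible for $\mathbf P_A$: it satisfies all of $\mathbf P_A$'s constraints, and $R_n^\star\ge L_n^\star>0$ keeps the objective finite. By global optimality of Stage A, $A(\mathbf R^\star)\le A(\mathbf R^*)$. Finally, \eqref{eq:exact_objective_decomposition} applied at $\mathbf R^*$ gives $F_0^{\mathrm{FREDI}}=A(\mathbf R^*)+\sum_n\rho_n\ln\kappa_n(R_n^*)$. This is well defined because Stage B admits a positive-utility solution by hypothesis, so $V_n(R_n^*)>0$. Subtracting yields
\begin{equation*}
F_0^\star-F_0^{\mathrm{FREDI}}\le A(\mathbf R^\star)-A(\mathbf R^*)+\sum_n\rho_n\ln\frac{\overline\kappa_n}{\kappa_n(R_n^*)}\le\sum_n\rho_n\ln\frac{\overline\kappa_n}{\kappa_n(R_n^*)},
\end{equation*}
which is \eqref{eq:fredi_global_suboptimality_bound}. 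Lemma~\ref{lem:set_decreasing_monotonic} is not needed here; it only justifies the candidate enumeration in Stage B.
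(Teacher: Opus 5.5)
Your proof is correct and follows essentially the paper's argument. The ingredients are the same: the envelope bound on utility per unit load, the load constraint $L_n\le R_n$, Stage-A optimality giving $A(\mathbf R^\star)\le A(\mathbf R^*)$, and the exact decomposition at $\mathbf R^*$. The only difference is cosmetic: you apply $\overline{\kappa}_n$ directly to the optimizer's own thresholds, whereas the paper first replaces them by a maximizer of $V_n(R_n^\star)$ and then bounds $\kappa_n(R_n^\star)\le\overline{\kappa}_n$. The finite-domain step you flag is sound, and the paper's proof relies on the same fact implicitly when it bounds the maximizer's ratio by $\overline{\kappa}_n$. To see it, map $\alpha^l$ down to the largest element of $\mathcal C_n\cup\{0,1\}$ not exceeding it, and $\alpha^u$ up to the smallest element not below it. This preserves every label, including the equal-threshold case, where all events are then decided at the first exit.
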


\emph{Proof:} The proof is shown in Appendix~\ref{sec:global_performance_bound}.

\begin{corollary}[Conditional global optimality]
\label{cor:fredi_global_optimality}
If $\kappa_n(\xi)$ is constant over the positive-utility capacity domain of every UE~$n$, then FREDI globally solves the empirical problem~$\mathbf P_0$.
\end{corollary}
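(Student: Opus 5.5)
The plan is to derive Corollary~\ref{cor:fredi_global_optimality} directly from the two-sided bound of Theorem~\ref{theo:fredi_global_bound}: it suffices to show that constancy of $\kappa_n(\cdot)$ forces $\overline{\kappa}_n=\kappa_n(R_n^*)$ for every UE. Each logarithm on the right-hand side of~\eqref{eq:fredi_global_suboptimality_bound} then vanishes, which squeezes $F_0^{\star}-F_0^{\mathrm{FREDI}}$ to zero. Since Stage~A is solved globally and Stage~B exactly, the composite objective $F_0^{\mathrm{FREDI}}$ is attained by a feasible point of $\mathbf P_0$, so equality with $F_0^\star$ means FREDI is globally optimal.

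The key step links the envelope $\overline{\kappa}_n$ in~\eqref{eq:kappa_upper_envelope} to values of $\kappa_n(\xi)$ at specific capacity levels. Fix any candidate $(\alpha_n^l,\alpha_n^u)\in\mathcal Q_n^+$ and set $\xi=L_n(\alpha_n^l,\alpha_n^u)>0$. This pair is feasible in~\eqref{eq:stage_b_value_function} at level $\xi$ and has positive utility, so $\xi$ lies in the positive-utility capacity domain and $V_n(\xi)\ge \mathcal U_n(\alpha_n^l,\alpha_n^u)$. Hence
\[
\kappa_n(\xi)=\frac{|\Phi_n|V_n(\xi)}{\xi}\ge \frac{|\Phi_n|\,\mathcal U_n(\alpha_n^l,\alpha_n^u)}{L_n(\alpha_n^l,\alpha_n^u)} .
\]
If $\kappa_n\equiv c_n$ on the domain, then maximizing over the finite set $\mathcal Q_n^+$ gives $\overline{\kappa}_n\le c_n$.

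For the reverse inequality, which is needed only to make the logarithm nonpositive, I take the hypothesis of Theorem~\ref{theo:fredi_global_bound}: $R_n^*$ lies in the positive-utility domain, so $\kappa_n(R_n^*)=c_n$. Combining the two facts gives $\ln(\overline{\kappa}_n/\kappa_n(R_n^*))\le 0$ for each $n$. Together with the lower bound $0$ in~\eqref{eq:fredi_global_suboptimality_bound} and $\rho_n>0$, this yields $F_0^\star=F_0^{\mathrm{FREDI}}$. We only need "$\le 0$", not equality of the envelope, so a slight mismatch between $\mathcal Q_n^+$ and the domain is harmless.

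The main obstacle is the bookkeeping of domains. I must check that every candidate in $\mathcal Q_n^+$ really produces a capacity level $\xi=L_n>0$ that lies inside the "positive-utility capacity domain" on which constancy is assumed. I must also make sure the definition of $\mathcal Q_n^+$ (positive utility, positive load) matches the domain convention stated after~\eqref{eq:stage_b_value_function}. I would make both points explicit rather than leave them implicit.

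A fallback, avoiding Theorem~\ref{theo:fredi_global_bound}, is to use the exact decomposition~\eqref{eq:exact_objective_decomposition}. The term $K(\mathbf R)=\sum_n\rho_n\ln c_n$ is constant, so maximizing the $\mathbf P_0$ objective over resources is equivalent to maximizing $A(\mathbf R)$, which Stage~A does. This argument requires the optimum of $\mathbf P_0$ to have $R_n$ in the positive-utility domain, and that holds because the objective is finite there.
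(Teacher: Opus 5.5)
Your proposal is correct and follows the paper's route. The paper likewise invokes Theorem~\ref{theo:fredi_global_bound} and notes that constancy of $\kappa_n$ forces $\overline{\kappa}_n=\kappa_n(R_n^*)$, so the upper bound vanishes. Your explicit check that $\overline{\kappa}_n\leq c_n$, obtained by evaluating $\kappa_n$ at $\xi=L_n(\alpha_n^l,\alpha_n^u)$ for each candidate in $\mathcal Q_n^{+}$, is the same envelope step used in the theorem's proof, and the one-sided inequality suffices, as you observe.
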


\begin{proof}
Under the stated condition, $\overline{\kappa}_n=\kappa_n(R_n^*)$ for every UE, so the upper bound in~\eqref{eq:fredi_global_suboptimality_bound} is zero.
\end{proof}

The factor $\kappa_n(\xi)=|\Phi_n|V_n(\xi)/\xi$ measures the optimized inference utility obtained per unit of normalized reserved capacity. It therefore reflects both the predictive yield of the selected threshold pair and any unused capacity caused by the discrete empirical operating points. If the Stage-B load is smaller than $R_n^*$, the unused reservation can be released without changing the thresholds or the objective of~$\mathbf P_0$; this release does not retroactively change the Stage-A allocation decision.

\begin{remark}[Solver-certified bound]
If the mixed-integer solver terminates before proving the exact Stage-A optimum, let $\widehat{\mathbf R}$ denote its incumbent processing allocation, $A^{\mathrm{inc}}=A(\widehat{\mathbf R})$ its incumbent objective, and $A^{\mathrm{ub}}$ its certified upper bound. Then
\begin{equation}
F_0^{\star}
\leq
A^{\mathrm{ub}}
+\sum_{n=1}^{N}\rho_n\ln\overline{\kappa}_n.
\label{eq:solver_certified_p0_upper_bound}
\end{equation}
After exact Stage-B optimization under $\widehat{\mathbf R}$, the resulting feasible objective $F_0^{\mathrm{FREDI}}(\widehat{\mathbf R})$ satisfies
\begin{equation}
F_0^{\star}-F_0^{\mathrm{FREDI}}(\widehat{\mathbf R})
\leq
\underbrace{A^{\mathrm{ub}}-A^{\mathrm{inc}}}_{\text{Stage-A solver gap}}
+\underbrace{\sum_{n=1}^{N}\rho_n
\ln\!\left(\frac{\overline{\kappa}_n}{\kappa_n(\widehat R_n)}\right)}_{\text{decomposition bound}}.
\label{eq:solver_certified_end_to_end_gap}
\end{equation}
The second term is evaluated from the Stage-B candidate sets and the realized FREDI operating points; no iterative exchange between the two stages is required.
\end{remark}

The FREDI workflow is illustrated in Fig.~\ref{fig:proposed_optimization_workflow}. Stage~A allocates secure wireless and ES processing resources proportionally across UEs, whereas Stage~B optimizes the dual inference thresholds under the resulting capacities. The $N$ threshold-selection problems in Stage~B are separable and can be solved in parallel using Algorithm~\ref{alg:threshold_selection}. Theorem~\ref{theo:fredi_global_bound} bounds the end-to-end loss relative to the empirical global optimum, and Corollary~\ref{cor:fredi_global_optimality} identifies the proportional-response condition under which the bound vanishes. The detailed solutions of the two stages are presented in Sections~\ref{sec:solve_subproblem_a} and~\ref{sec:solve_subproblem_b}, respectively.

\begin{figure}[!t]
    \centering
    \includegraphics[width=0.90\columnwidth]{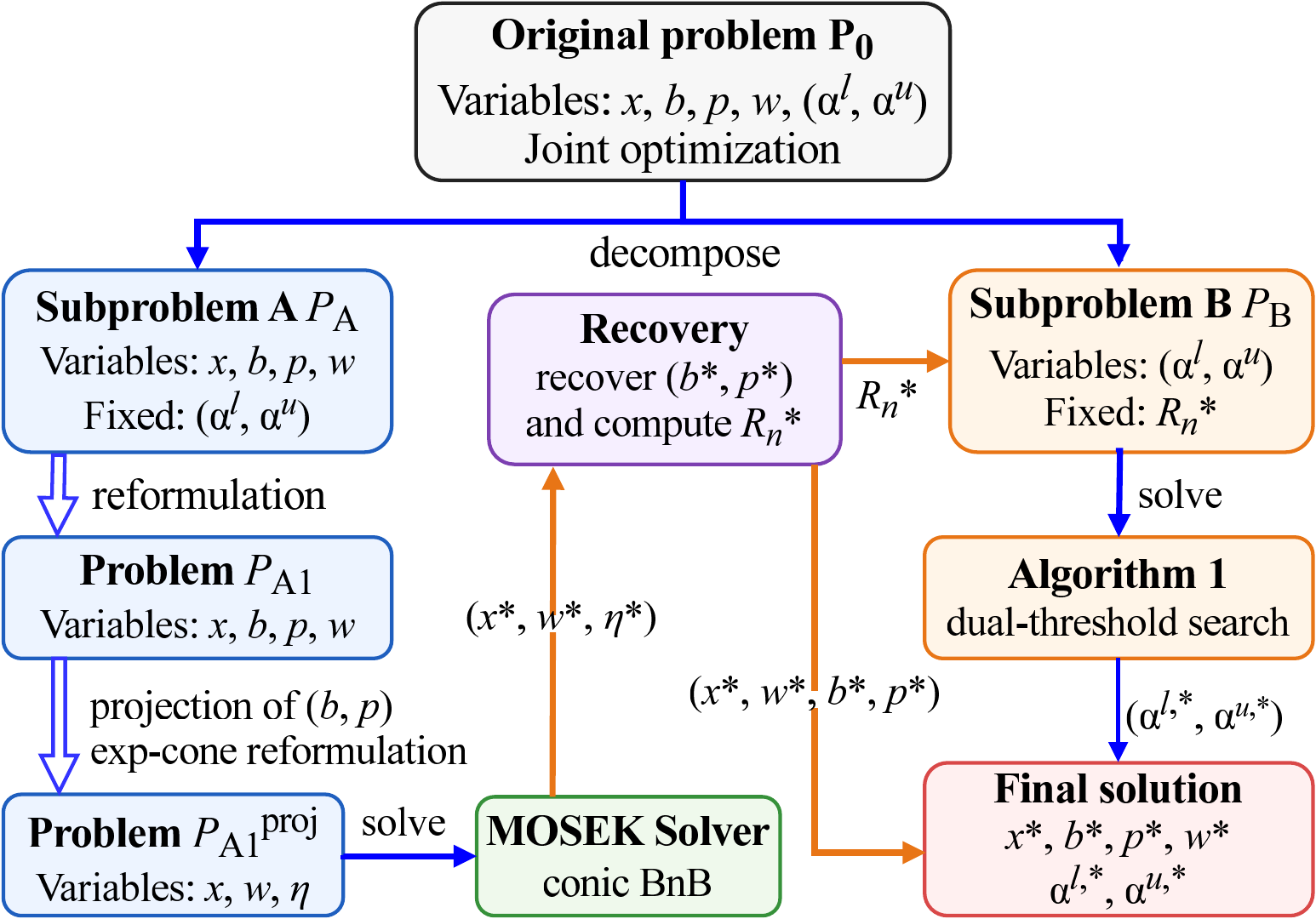}
    \caption{FREDI algorithm workflow.}
    \label{fig:proposed_optimization_workflow}
\end{figure}

%% file: 4_3_subproblem_a.tex
\subsection{FREDI Stage A: Fair Resource Allocation}
\label{sec:solve_subproblem_a}

FREDI Stage~A implements proportional fairness through the UE--ES association and wireless/edge resource-allocation problem $\mathbf{P}_A$. Problem~$\mathbf{P}_A$ contains products between binary association variables and continuous resource variables. Moreover, when there is no connection ($x_{nj} = 0$) between UE~$n$ and ES~$j$, the resource allocation variables can be set to $0$. Thus, constraints~\eqref{eq:cons_band_user_0}, \eqref{eq:cons_pow_user_0}, and~\eqref{eq:cons_processing_user_0} equivalently become
\begin{align}
0\leq b_{nj}&\leq b^{\max}x_{nj}, \qquad \forall (n,j)\in\mathbb{N}\times\mathbb{M}
\label{eq:link_bandwidth}
\\
0\leq p_{nj}&\leq p^{\max}x_{nj},  \qquad \forall (n,j)\in\mathbb{N}\times\mathbb{M}
\label{eq:link_power}
\\
0\leq w_{nj}&\leq w^{\max}x_{nj},  \qquad \forall (n,j)\in\mathbb{N}\times\mathbb{M}
\label{eq:link_processing}
\end{align}
Once these constraints are imposed, the products $x_{nj}b_{nj}$ and $x_{nj}w_{nj}$ are unnecessary in the aggregate capacity constraints. Thus, constraints~\eqref{eq:cons_band_edge_0}, \eqref{eq:cons_comp_edge_0}, and~\eqref{eq:cons_comp_demand_user_0} reduce, respectively, to
\begin{align}
&\sum_{n=1}^{N}b_{nj}\leq B_j,
&&\forall j\in\mathbb{M},
\label{eq:cons_band_edge_a1}
\\
&\sum_{n=1}^{N}w_{nj}\leq W_j,
&&\forall j\in\mathbb{M},
\label{eq:cons_comp_edge_a1}
\\
&\sum_{j=1}^{M}w_{nj}\leq|\Phi_n|,
&&\forall n\in\mathbb{N}.
\label{eq:cons_processing_demand_a1}
\end{align}

For the discrete security-level requirement, define
\begin{equation}
a_{nj}=
\begin{cases}
1, & s_j^{\mathrm{ES}}\geq s_n^{\mathrm{UE}},\\
0, & \text{otherwise},
\end{cases}
\qquad
\forall(n,j)\in\mathbb{N}\times\mathbb{M}.
\label{eq:security_compatibility}
\end{equation}
The original security constraint~\eqref{eq:cons_security_user_0} can then be replaced by
\begin{equation}
x_{nj}\leq a_{nj},
\qquad
\forall(n,j)\in\mathbb{N}\times\mathbb{M}.
\label{eq:cons_security_compatibility}
\end{equation}

\subsubsection{Secure-Rate and Delay Reformulation}

Define the normalized legitimate and eavesdropper channel gains as
\begin{equation}
\gamma_{nj}\triangleq\frac{g_{nj}}{\sigma_{nj}^{2}},
\qquad
\gamma_n^{\mathrm{EV}}
\triangleq
\frac{g_n^{\mathrm{EV}}}{(\sigma_n^{\mathrm{EV}})^{2}}.
\label{eq:normalized_channel_gains}
\end{equation}
For a link that can provide positive secrecy, we require
$\gamma_{nj}>\gamma_n^{\mathrm{EV}}$. The secure rate can then be written without the positive-part operator as
\begin{equation}
\begin{split}
r_{nj}^{\mathrm{se}}(b_{nj},p_{nj})
={}&b_{nj}\log_2\!\left(1+\frac{\gamma_{nj}p_{nj}}{b_{nj}}\right)\\
&-b_{nj}\log_2\!\left(1+\frac{\gamma_n^{\mathrm{EV}}p_{nj}}{b_{nj}}\right),
\end{split}
\label{eq:secure_rate_reformulated}
\end{equation}
with the continuous extension $r_{nj}^{\mathrm{se}}(0,p)=0$ for $p\geq0$. If $x_{nj}=1$, the communication-delay requirement is
\begin{equation}
\frac{D_n}{r_{nj}^{\mathrm{se}}(b_{nj},p_{nj})}
\leq t_n^{\mathrm{r}}.
\label{eq:selected_link_delay}
\end{equation}
Using the binary association variable, this condition is equivalently represented for every link by
\begin{equation}
r_{nj}^{\mathrm{se}}(b_{nj},p_{nj})
\geq
\bar r_n x_{nj},
\qquad
\bar r_n\triangleq\frac{D_n}{t_n^{\mathrm{r}}},
\label{eq:minimum_secure_rate}
\end{equation}
for all $(n,j)\in\mathbb{N}\times\mathbb{M}$. When $x_{nj}=0$, the linking constraints set $b_{nj}=p_{nj}=0$, and~\eqref{eq:minimum_secure_rate} reduces to $0\geq0$. When $x_{nj}=1$, it is equivalent to~\eqref{eq:selected_link_delay}.

To characterize the curvature of the secure-rate constraint, define
\begin{equation}
q_{nj}(z)
=
\log_2(1+\gamma_{nj}z)
-
\log_2(1+\gamma_n^{\mathrm{EV}}z).
\label{eq:secure_rate_auxiliary}
\end{equation}
For $\gamma_{nj}>\gamma_n^{\mathrm{EV}}$, $q_{nj}(z)$ is concave for $z\geq0$. This property is used below to establish the mixed-integer convex structure of $\mathbf{P}_{A1}$. 

Because $|\Phi_n|$ is constant for every UE within the scheduling period, maximizing $\sum_n\rho_n\ln(R_n/|\Phi_n|)$ is equivalent to maximizing $\sum_n\rho_n\ln(R_n)$. After applying the linking constraints, $R_n=\sum_j w_{nj}$, so we use the latter form to obtain a simpler conic representation.

The resulting resource-allocation problem is
\begin{subequations}
\label{eq:prob_a1}
\begin{align}
\mathbf{P}_{A1}:\quad
\underset{
\mathbf{x},\boldsymbol{b},\boldsymbol{p},\boldsymbol{w}
}{
\operatorname{maximize}
}
\quad
&\sum_{n=1}^{N}\rho_n
\ln\!\left(
\sum_{j=1}^{M}w_{nj}
\right)
\tag{\ref{eq:prob_a1}}
\end{align}

\noindent\textnormal{subject to~\eqref{eq:link_bandwidth}--\eqref{eq:link_processing}, \eqref{eq:cons_band_edge_a1}--\eqref{eq:cons_processing_demand_a1}, \eqref{eq:cons_security_compatibility}, \eqref{eq:minimum_secure_rate}, \eqref{eq:cons_conn_user_0}, and~\eqref{eq:cons_binary_variables_0}.}
\end{subequations}

\subsubsection{Feasible-Pair Preprocessing}

A UE--ES pair is individually feasible only if it satisfies the security-level requirement and can support the required secure transmission rate under the per-link resource limits. Define
\begin{equation}
\mathbb{M}_n^{\mathrm{f}}
\triangleq
\left\{
 j\in\mathbb{M}\,\middle|\,
 a_{nj}=1,
 \ \gamma_{nj}>\gamma_n^{\mathrm{EV}},
 \ r_{nj}^{\mathrm{se}}(b^{\max},p^{\max})\geq\bar r_n
\right\}.
\label{eq:feasible_es_set}
\end{equation}
All infeasible association variables are fixed according to
\begin{equation}
x_{nj}=0,
\qquad
\forall n\in\mathbb{N},\; j\notin\mathbb{M}_n^{\mathrm{f}}.
\label{eq:infeasible_pair_fixing}
\end{equation}
If $\mathbb{M}_n^{\mathrm{f}}=\varnothing$ for any UE~$n$, then $\mathbf{P}_{A1}$ is infeasible.

After this fixing, $\mathbf{P}_{A1}$ is a \emph{mixed-integer convex} problem whose only
source of nonconvexity is the binary association. Indeed, every remaining pair satisfies
$\gamma_{nj}>\gamma_n^{\mathrm{EV}}$, so
$q_{nj}''(z)=\frac{1}{\ln 2}[(\gamma_n^{\mathrm{EV}})^2(1+\gamma_n^{\mathrm{EV}}z)^{-2}
-\gamma_{nj}^2(1+\gamma_{nj}z)^{-2}]<0$ for $z\geq0$ and $q_{nj}$ is concave; its closed
perspective $b_{nj}q_{nj}(p_{nj}/b_{nj})=r_{nj}^{\mathrm{se}}(b_{nj},p_{nj})$ is therefore
jointly concave in $(b_{nj},p_{nj})$~\cite[Sec.~3.2.6]{boyd2004convex}, making
$\bar r_nx_{nj}-r_{nj}^{\mathrm{se}}(b_{nj},p_{nj})\leq0$ convex. All other constraints of the
continuous relaxation are affine, and the objective is concave because $\rho_n>0$ and
$\ln(\cdot)$ is concave and increasing.

For every feasible pair, define the minimum bandwidth required at the maximum transmission power:
\begin{equation}
\underline b_{nj}
\triangleq
\min_{0\leq b\leq b^{\max}}
\left\{
 b\,\middle|\,
 r_{nj}^{\mathrm{se}}(b,p^{\max})\geq\bar r_n
\right\}.
\label{eq:min_required_bandwidth}
\end{equation}
The one-dimensional value $\underline b_{nj}$ can be computed efficiently by bisection because $r_{nj}^{\mathrm{se}}(b,p^{\max})$ is continuous and non-decreasing in $b$ over the feasible interval. It is also bounded: $b\,q_{nj}(p/b)\to p(\gamma_{nj}-\gamma_n^{\mathrm{EV}})/\ln 2$ as $b\to\infty$, so a link admits a feasible allocation only if $\bar r_n<p^{\max}(\gamma_{nj}-\gamma_n^{\mathrm{EV}})/\ln 2$, regardless of the available bandwidth. Security-induced infeasibility is therefore a property of the secrecy advantage rather than of spectrum scarcity alone.

\subsubsection{Projection of Communication Variables}

The power variables do not appear in the objective, and no aggregate power constraint is imposed. This structure permits the communication variables to be eliminated exactly rather than handled by a custom branch-and-bound procedure.

\begin{lemma}
\label{lem:exact_communication_projection}
Consider a binary association matrix $\mathbf{x}$ satisfying $\sum_{j=1}^{M}x_{nj}=1$ and $x_{nj}=0$ for all $j\notin\mathbb{M}_n^{\mathrm{f}}$. There exist bandwidth and power allocations $(\boldsymbol{b},\boldsymbol{p})$ satisfying~\eqref{eq:link_bandwidth}, \eqref{eq:link_power}, \eqref{eq:cons_band_edge_a1}, and~\eqref{eq:minimum_secure_rate} if and only if
\begin{equation}
\sum_{n=1}^{N}\underline b_{nj}x_{nj}\leq B_j,
\qquad
\forall j\in\mathbb{M}.
\label{eq:projected_bandwidth_constraint}
\end{equation}
Whenever~\eqref{eq:projected_bandwidth_constraint} holds, one feasible recovery is
\begin{equation}
b_{nj}=\underline b_{nj}x_{nj},
\qquad
p_{nj}=p^{\max}x_{nj}.
\label{eq:communication_recovery}
\end{equation}
\end{lemma}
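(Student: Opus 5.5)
The proof splits into the two directions of the equivalence, and both rest on one monotonicity fact about the secure rate. The fact is this: for every feasible pair $(n,j)$ with $j\in\mathbb{M}_n^{\mathrm f}$, the secure rate $r_{nj}^{\mathrm{se}}(b,p)$ is non-decreasing in $p$ for fixed $b$ and non-decreasing in $b$ for fixed $p$.

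Monotonicity in $p$ follows from $q_{nj}'(z)=\frac{1}{\ln 2}\big[\gamma_{nj}/(1+\gamma_{nj}z)-\gamma_n^{\mathrm{EV}}/(1+\gamma_n^{\mathrm{EV}}z)\big]=\frac{1}{\ln 2}\,\frac{\gamma_{nj}-\gamma_n^{\mathrm{EV}}}{(1+\gamma_{nj}z)(1+\gamma_n^{\mathrm{EV}}z)}>0$. Since $r^{\mathrm{se}}_{nj}(b,p)=b\,q_{nj}(p/b)$, it is increasing in $p$ for each $b>0$.

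Monotonicity in $b$ uses the perspective structure. We have $\partial_b[b\,q(p/b)]=q(z)-zq'(z)$ with $z=p/b$. Because $q_{nj}$ is concave with $q_{nj}(0)=0$, the tangent inequality gives $q(0)\le q(z)-zq'(z)$, so $q(z)-zq'(z)\ge 0$. This is the same property the paper invokes when it says $\underline b_{nj}$ is computable by bisection. Continuity at $b=0$ uses the extension $r^{\mathrm{se}}_{nj}(0,p)=0$.

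For the \emph{necessity} direction, take any $(\boldsymbol b,\boldsymbol p)$ satisfying \eqref{eq:link_bandwidth}, \eqref{eq:link_power}, \eqref{eq:cons_band_edge_a1}, and \eqref{eq:minimum_secure_rate}. Fix a UE $n$, and let $j$ be its unique associated ES, so $x_{nj}=1$ and $j\in\mathbb{M}_n^{\mathrm f}$.
\begin{itemize}
\item Constraint \eqref{eq:minimum_secure_rate} gives $r^{\mathrm{se}}_{nj}(b_{nj},p_{nj})\ge\bar r_n$, with $b_{nj}\le b^{\max}$ and $p_{nj}\le p^{\max}$.
\item Monotonicity in $p$ then gives $r^{\mathrm{se}}_{nj}(b_{nj},p^{\max})\ge\bar r_n$.
\item Hence $b_{nj}$ lies in the feasible set of the minimization \eqref{eq:min_required_bandwidth}, so $b_{nj}\ge\underline b_{nj}=\underline b_{nj}x_{nj}$.
\item For $x_{nj}=0$, \eqref{eq:link_bandwidth} forces $b_{nj}=0=\underline b_{nj}x_{nj}$.
\end{itemize}
Summing over $n$ and applying \eqref{eq:cons_band_edge_a1} yields \eqref{eq:projected_bandwidth_constraint}.

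For \emph{sufficiency}, we verify the recovery \eqref{eq:communication_recovery} constraint by constraint.
\begin{itemize}
\item Constraints \eqref{eq:link_bandwidth} and \eqref{eq:link_power} hold because $0\le\underline b_{nj}\le b^{\max}$.
\item Constraint \eqref{eq:cons_band_edge_a1} is exactly \eqref{eq:projected_bandwidth_constraint}.
\item For $x_{nj}=0$, constraint \eqref{eq:minimum_secure_rate} reads $r^{\mathrm{se}}_{nj}(0,0)=0\ge 0$.
\item For $x_{nj}=1$, it requires $r^{\mathrm{se}}_{nj}(\underline b_{nj},p^{\max})\ge\bar r_n$, i.e.\ that the minimum in \eqref{eq:min_required_bandwidth} is attained.
\end{itemize}

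This attainment is the main point requiring care. The feasible set $\{b\in[0,b^{\max}]: r^{\mathrm{se}}_{nj}(b,p^{\max})\ge\bar r_n\}$ is nonempty, because $b^{\max}$ belongs to it by the definition of $\mathbb{M}_n^{\mathrm f}$. It is also closed, being the preimage of a closed set under the map $b\mapsto r^{\mathrm{se}}_{nj}(b,p^{\max})$, which is continuous on $[0,b^{\max}]$ including $b=0$ through the continuous extension. A nonempty closed subset of $[0,b^{\max}]$ contains its infimum, so the minimum is attained and the inequality holds at $\underline b_{nj}$. Moreover $\bar r_n>0$ forces $\underline b_{nj}>0$, which excludes the degenerate $0/0$ case.
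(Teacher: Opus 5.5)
Your proof is correct and follows the paper's argument essentially step for step: monotonicity of the secure rate in $p$ gives $b_{nj}\geq\underline b_{nj}$ on each selected link, summing against $B_j$ yields \eqref{eq:projected_bandwidth_constraint}, and the recovery \eqref{eq:communication_recovery} is then checked constraint by constraint. Your closedness argument, showing that the minimum in \eqref{eq:min_required_bandwidth} is actually attained, fills in a detail the paper leaves implicit (it says only ``by the definition of $\underline b_{nj}$''); note that, as in the paper, the monotonicity in $b$ is not needed for either direction of the lemma.
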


\emph{Proof:} See Appendix~\ref{sec:proof_exact_communication_projection}.

The recovery in~\eqref{eq:communication_recovery} selects one representative
allocation among possibly many that share the same projected point; it excludes no
feasible association or processing decision.

\begin{remark}
The exact projection relies on the absence of a transmission-energy objective or an aggregate power budget. If either feature is introduced, setting every selected link to $p^{\max}$ may no longer be optimal, and the variables $(\boldsymbol{b},\boldsymbol{p})$ must be retained in the optimization.
\end{remark}

\subsubsection{Mixed-Integer Exponential-Cone Reformulation}

Introduce an auxiliary variable $\eta_n$ for each UE and impose
\begin{equation}
\eta_n\leq\ln\!\left(\sum_{j=1}^{M}w_{nj}\right).
\label{eq:log_auxiliary_constraint}
\end{equation}
Using the primal exponential cone
\begin{equation}
\mathcal{K}_{\exp}
\triangleq
\operatorname{cl}\left\{
(u,v,z)\in\mathbb{R}^{3}\,\middle|\,
 v>0,\ u\geq v\exp(z/v)
\right\},
\end{equation}
constraint~\eqref{eq:log_auxiliary_constraint} is equivalently represented as
\begin{equation}
\left(
\sum_{j=1}^{M}w_{nj},\ 1,\ \eta_n
\right)
\in\mathcal{K}_{\exp}.
\label{eq:log_exponential_cone}
\end{equation}

Because $\rho_n>0$ and the objective maximizes $\sum_n\rho_n\eta_n$,
\eqref{eq:log_exponential_cone} is active at the optimum, so introducing $\eta_n$
linearizes the objective without changing it.

After applying Lemma~\ref{lem:exact_communication_projection}, Subproblem~A is reformulated as
\begin{subequations}
\label{eq:prob_a1_projected}
\begin{align}
\mathbf{P}_{A1}^{\mathrm{proj}}:\quad
\underset{\mathbf{x},\boldsymbol{w},\boldsymbol{\eta}}
{\operatorname{maximize}}
\quad &
\sum_{n=1}^{N}\rho_n\eta_n
\tag{\ref{eq:prob_a1_projected}}
\end{align}

\noindent\textnormal{subject to~\eqref{eq:log_exponential_cone}, \eqref{eq:cons_conn_user_0}, \eqref{eq:projected_bandwidth_constraint}, \eqref{eq:cons_comp_edge_a1}, \eqref{eq:link_processing}, \eqref{eq:cons_processing_demand_a1}, \eqref{eq:infeasible_pair_fixing}, and~\eqref{eq:cons_binary_variables_0}.}
\end{subequations}

Problem~$\mathbf{P}_{A1}^{\mathrm{proj}}$ is accordingly a mixed-integer
exponential-cone problem: all constraints are affine except for the
exponential-cone memberships, and the only discrete variables are the
binary association variables.

\begin{proposition}
\label{prop:pa1_miecp_equivalence}
Problems~$\mathbf{P}_{A1}$ and $\mathbf{P}_{A1}^{\mathrm{proj}}$ have the same optimal objective value. Furthermore, any global optimizer $(\mathbf{x}^{*},\boldsymbol{w}^{*},\boldsymbol{\eta}^{*})$ of $\mathbf{P}_{A1}^{\mathrm{proj}}$, together with the communication recovery in~\eqref{eq:communication_recovery}, yields a global optimizer of $\mathbf{P}_{A1}$. Conversely, every feasible solution of $\mathbf{P}_{A1}$ maps to a feasible solution of $\mathbf{P}_{A1}^{\mathrm{proj}}$ with the same objective value by setting
\begin{equation}
\eta_n=\ln\!\left(\sum_{j=1}^{M}w_{nj}\right).
\end{equation}
\end{proposition}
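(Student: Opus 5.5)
The plan is to establish the two directions of feasibility correspondence with objective comparison, and then conclude equality of optimal values and transfer of optimizers. Throughout, feasible points of $\mathbf{P}_{A1}$ are taken to have $\sum_j w_{nj}>0$ for every UE, since otherwise the logarithmic objective is $-\infty$. The exponential cone membership~\eqref{eq:log_exponential_cone} with second coordinate $1$ is then exactly $\sum_j w_{nj}\geq \exp(\eta_n)$, i.e.,~\eqref{eq:log_auxiliary_constraint}.

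\textbf{Step 1 (from $\mathbf{P}_{A1}$ to the projected problem).} Take any feasible $(\mathbf{x},\boldsymbol{b},\boldsymbol{p},\boldsymbol{w})$ of $\mathbf{P}_{A1}$.
\begin{itemize}
\item First, $\mathbf{x}$ must satisfy the pair fixing~\eqref{eq:infeasible_pair_fixing}. If $x_{nj}=1$, then $a_{nj}=1$ by~\eqref{eq:cons_security_compatibility}.
\item A positive secure rate forces $\gamma_{nj}>\gamma_n^{\mathrm{EV}}$: otherwise~\eqref{eq:secure_rate_reformulated} is nonpositive, while $\bar r_n>0$.
\item Because $r^{\mathrm{se}}_{nj}$ is nondecreasing in both $b$ and $p$ on the feasible box, $r^{\mathrm{se}}_{nj}(b^{\max},p^{\max})\geq r^{\mathrm{se}}_{nj}(b_{nj},p_{nj})\geq\bar r_n$. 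Hence $j\in\mathbb{M}_n^{\mathrm{f}}$.
\end{itemize}
Monotonicity in $p$ holds because the derivative of $q_{nj}$ is positive when $\gamma_{nj}>\gamma_n^{\mathrm{EV}}$; monotonicity in $b$ is the property already used for the bisection in~\eqref{eq:min_required_bandwidth}. With this, the "only if" direction of Lemma~\ref{lem:exact_communication_projection} gives~\eqref{eq:projected_bandwidth_constraint}. The remaining constraints~\eqref{eq:cons_conn_user_0}, \eqref{eq:cons_comp_edge_a1}, \eqref{eq:link_processing}, \eqref{eq:cons_processing_demand_a1} and binarity carry over unchanged. Setting $\eta_n=\ln\sum_j w_{nj}$ satisfies the cone constraint with equality, so the objective values coincide.

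\textbf{Step 2 (from the projected problem to $\mathbf{P}_{A1}$).} Take any feasible $(\mathbf{x},\boldsymbol{w},\boldsymbol{\eta})$ of $\mathbf{P}_{A1}^{\mathrm{proj}}$. By the "if" direction of Lemma~\ref{lem:exact_communication_projection}, the recovery~\eqref{eq:communication_recovery} supplies $(\boldsymbol{b},\boldsymbol{p})$ satisfying~\eqref{eq:link_bandwidth}, \eqref{eq:link_power}, \eqref{eq:cons_band_edge_a1} and~\eqref{eq:minimum_secure_rate}. Security compatibility~\eqref{eq:cons_security_compatibility} holds because $\mathbb{M}_n^{\mathrm{f}}$ only contains pairs with $a_{nj}=1$. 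Thus $(\mathbf{x},\boldsymbol{b},\boldsymbol{p},\boldsymbol{w})$ is feasible for $\mathbf{P}_{A1}$. The cone constraint gives $\eta_n\leq\ln\sum_j w_{nj}$, and with $\rho_n>0$ the $\mathbf{P}_{A1}$ objective is at least $\sum_n\rho_n\eta_n$.

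\textbf{Step 3 (conclusion).} Steps 1 and 2 show that each problem's optimal value is bounded by the other's, so the two values are equal; infeasibility and the attainment of the supremum also transfer. If $(\mathbf{x}^*,\boldsymbol{w}^*,\boldsymbol{\eta}^*)$ is optimal for $\mathbf{P}_{A1}^{\mathrm{proj}}$, Step 2 yields a $\mathbf{P}_{A1}$-feasible point whose value is at least $\mathrm{OPT}_{\mathrm{proj}}=\mathrm{OPT}_{A1}$, so that point is globally optimal. The converse mapping is exactly Step 1.

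\textbf{Main obstacle.} The only delicate point is Step 1's claim that feasibility of $\mathbf{P}_{A1}$ implies the pair fixing~\eqref{eq:infeasible_pair_fixing}, since that fixing is imposed as a constraint only in the projected problem. This requires the monotonicity of $r^{\mathrm{se}}_{nj}$ in $(b,p)$ together with careful handling of the positive-part and $b=0$ boundary cases. I would settle it by invoking the concave-perspective structure already established: the perspective of a concave, increasing $q_{nj}$ with $q_{nj}(0)=0$ is nondecreasing in $b$.
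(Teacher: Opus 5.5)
Your proposal is correct and follows the same two-direction argument as the paper. Feasible points of $\mathbf{P}_{A1}$ are mapped to projected ones via the necessity part of Lemma~\ref{lem:exact_communication_projection} with $\eta_n=\ln\sum_j w_{nj}$, and projected points are mapped back via the recovery~\eqref{eq:communication_recovery}, where $\rho_n>0$ gives the reverse inequality. Your explicit checks that $\mathbf{P}_{A1}$-feasibility already forces the pair fixing~\eqref{eq:infeasible_pair_fixing}, and your treatment of the $\sum_j w_{nj}=0$ case, add details that the paper's short proof leaves implicit.
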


\begin{proof}
Lemma~\ref{lem:exact_communication_projection} and $\eta_n=\ln(\sum_jw_{nj})$ map every original feasible solution to an equally valued projected one, so $\operatorname{opt}(\mathbf P_{A1}^{\mathrm{proj}})\geq\operatorname{opt}(\mathbf P_{A1})$. Conversely, the lemma recovers $(\boldsymbol b,\boldsymbol p)$, while $\rho_n>0$ makes~\eqref{eq:log_exponential_cone} tight at optimum, yielding the reverse inequality and equivalence.
\end{proof}

We solve $\mathbf{P}_{A1}^{\mathrm{proj}}$ with the MOSEK mixed-integer conic
solver~\cite{mosek_fusion_2026} and recover
$(\boldsymbol{b}^{*},\boldsymbol{p}^{*})$ via~\eqref{eq:communication_recovery}; the
solver's certified gap therefore applies to $\mathbf{P}_{A1}$ itself.

%% file: 4_4_subproblem_b.tex
\subsection{FREDI Stage B: Dual-Threshold Inference Optimization}
\label{sec:solve_subproblem_b}

FREDI Stage~B acts on the inference layer: with $(\mathbf{x}^{*},\boldsymbol{b}^{*},\boldsymbol{p}^{*},\boldsymbol{w}^{*})$ fixed, Subproblem~$\mathbf{P}_B$ separates across UEs. Because $\rho_n>0$ and $\ln(\cdot)$ is strictly increasing, the threshold pair of UE~$n$ can be obtained from
\begin{align}
\mathbf{SP}_n:\quad
&\underset{\alpha_n^l,\alpha_n^u}{\operatorname{maximize}}
&&\mathcal{U}_n(\alpha_n^l,\alpha_n^u)
\\
&\text{subject to}
&&0\leq\alpha_n^l\leq\alpha_n^u\leq1,
\\
&&&L_n(\alpha_n^l,\alpha_n^u)\leq R_n^{*},
\label{eq:cons_comp_user_sub_b_n}
\end{align}
where $R_n^{*}=\sum_{j=1}^{M}x_{nj}^{*}w_{nj}^{*}$. By construction of Subproblem~A, $R_n^{*}\leq|\Phi_n|$; hence every feasible threshold solution satisfies the QoS chain $L_n(\alpha_n^l,\alpha_n^u)\leq R_n^{*}\leq|\Phi_n|$. 

Stage~B therefore controls the detection--offloading behavior of the CNN through the label
rule~\eqref{eq:label_function}, rather than introducing fairness into the inference rule.

\textbf{Role of monotonicity.}
Lemma~\ref{lem:set_decreasing_monotonic} implies that the empirical utility and offloaded load are non-increasing in both thresholds. Hence, an optimal feasible pair lies on the lower boundary of the feasible threshold region, which permits directional pruning during an exact finite search.

\textbf{Exact empirical search.}
The empirical utility and load are piecewise constant because they are obtained by counting samples. Their values can change only when a threshold crosses an observed confidence score. It is therefore sufficient to search threshold pairs drawn from the finite set of observed scores.

\begin{algorithm}[t]
\caption{Dual-Threshold Selection for UE~$n$}
\label{alg:threshold_selection}
\KwIn{Observed confidence-score set $\mathcal{C}_n$; event labels; capacity $R_n^{*}$}
\KwOut{Optimal feasible thresholds $(\alpha_n^{l,*},\alpha_n^{u,*})$ and response envelope $\overline{\kappa}_n$}
Sort the unique values in $\mathcal{C}_n\cup\{0,1\}$
into $c_1<\cdots<c_K$\;
Set $U^{\mathrm{best}}\leftarrow-\infty$\;
Set $\overline{\kappa}_n\leftarrow0$\;
\For{$i\leftarrow1$ \KwTo $K$}{
    \For{$j\leftarrow i$ \KwTo $K$}{
        Set $(\alpha_n^l,\alpha_n^u)\leftarrow(c_i,c_j)$\;
        Evaluate $L_n(\alpha_n^l,\alpha_n^u)$ and
        $\mathcal{U}_n(\alpha_n^l,\alpha_n^u)$\;
        \If{$L_n>0$ and $\mathcal{U}_n>0$}{
            $\overline{\kappa}_n\leftarrow
            \max\{\overline{\kappa}_n,
            |\Phi_n|\mathcal{U}_n/L_n\}$\;
        }
        \If{$L_n(\alpha_n^l,\alpha_n^u)\leq R_n^{*}$ and
        $\mathcal{U}_n>U^{\mathrm{best}}$}{
            Store the threshold pair and update $U^{\mathrm{best}}$\;
        }
    }
}
\Return{$(\alpha_n^{l,*},\alpha_n^{u,*})$ and $\overline{\kappa}_n$}\;
\end{algorithm}

The exhaustive search is globally optimal over the empirical threshold domain. The additional envelope update evaluates~\eqref{eq:kappa_upper_envelope} without changing the selected thresholds. Monotonicity can be used to stop exploring a row or column once all remaining pairs are dominated or infeasible, provided that any pruned candidates cannot increase either the feasible utility or the response envelope.

%% file: 5__numerical_results.tex
\section{Numerical Results}
\label{sec:performanceevaluation:TMC}

We evaluate FREDI from three complementary perspectives. Scenario~S1 isolates the dual-threshold (DT) inference component against a single-threshold (ST) policy; Scenario~S2 evaluates the end-to-end fairness--utility trade-off and validates the two-stage decomposition; and Scenario~S3 examines security-constrained association and Stage-A scalability.

\subsection{Experimental Setup}
\label{subsec:experimental_setup}

\subsubsection{Dataset, Models, and Implementation}

We use a binary cats--dogs corpus with 4,001 cat and 4,006 dog images for training and 1,011 cat and 1,012 dog images for testing~\cite{schubert2018catsdogs}. Images are resized to $224\times224$ and normalized using ImageNet statistics~\cite{deng2009imagenet}. 
UE-side screening uses early-exit MobileNetV2~\cite{sandler2018mobilenetv2} and ShuffleNetV2~\cite{ma2018shufflenet}, with intermediate heads formed by global average pooling and a two-class linear projection. MobileNetV2 and ShuffleNetV2 contain 18 and 9 exits, respectively. 
These lightweight backbones are representative of confidence-based early-exit inference~\cite{teerapittayanon2016branchynet,huang2018multiscale}.

Both models are trained on Google Cloud Platform using four vCPUs, 15~GB RAM, and one NVIDIA T4 GPU for 30 epochs, with batch size 64, learning rate $10^{-3}$, weight decay $10^{-4}$, and equally weighted cross-entropy losses across exits. Final-exit validation performance selects the checkpoint. For S1, exit-wise logits on the complete 2,023-image held-out test set are converted to positive-class softmax confidence traces; all threshold policies therefore operate on identical CNN outputs without retraining.

For S2 and S3, each UE generates $|\Phi_n|=100$ events per scheduling window and MobileNetV2 is used throughout. Unless otherwise stated, $M=3$, $B_j=20$~MHz, $D_n=128$~KiB, and $t_n^{\mathrm r}=100$~ms. Random geometry generates physically consistent legitimate and eavesdropper channel gains but is not itself swept. The projected problem $\mathbf{P}_{A1}^{\mathrm{proj}}$ is implemented using the MOSEK Fusion API~\cite{mosek_fusion_2026}; minimum-bandwidth coefficients $\underline b_{nj}$ are computed by bisection and infeasible UE--ES pairs are removed before model construction. In S3-B, the relative MIP gap is $10^{-4}$ and one warm-up solve is discarded before timing. 
Per-link caps are $b^{\max}=20$~MHz, $p^{\max}=0.2$~W ($23$~dBm) and $w^{\max}=100$ events/window, the noise PSD is $-174$~dBm/Hz. 
UEs are uniformly distributed in a $70$-m triangular serving region under distance based path loss (exponent $4$) and Rayleigh fading. The threshold grid uses $\Delta_\alpha=0.0025$.



\subsubsection{Scenario Design and Compared Methods}

The following defines the compared policies, normalized loads, and statistical protocol.

\textit{Scenario S1: Threshold policies for early-exit screening.}
Scenario~S1 isolates threshold selection from wireless and multi-UE resource-allocation effects. The allocated edge-processing budget is parameterized by the normalized ratio
\begin{equation}
\eta_R = \frac{R^{\star}}{|\Phi|},
\label{eq:normalized_processing_budget}
\end{equation}
with $\eta_R\in\{0.1,0.2,\ldots,1.0\}$. The proposed dual-threshold (DT) policy optimizes $(\alpha^l,\alpha^u)$ over $0\leq\alpha^l\leq\alpha^u\leq1$, whereas the coupled single-parameter (ST) policy uses $\tau\in[0.5,1]$ with $\alpha^l=1-\tau$ and $\alpha^u=\tau$, the symmetric confidence rule used in conventional early-exit inference. Because $1-\tau\leq\tau$ for every admissible $\tau$, the ST family is a one-dimensional subset of the DT family; S1 is therefore an ablation that isolates the value of decoupling the two thresholds, and DT is optimal for ST whenever the two are compared under the same processing budget. The comparison quantifies how much this additional degree of freedom is worth, and where the coupling becomes binding. Both use the same threshold-grid resolution $\Delta_\alpha=0.0025$. Refining to $\Delta_\alpha=0.00125$ at the most sensitive MobileNetV2 point, $\eta_R=0.3$, leaves the selected optimum and reported metrics unchanged. TPR-optimal ties are resolved by lower offloaded load, lower normalized MACs, higher F1 score, and then threshold order.

\textit{Scenario S2: Utility and decomposition validation.}
Scenario~S2 is the main system-level experiment. Each UE stream contains 50 positive and 50 negative held-out events sampled without replacement within a realization, with identical streams and network realization shared across methods. Sweeping $W_j\in\{200,160,120,100,80,60\}$ gives the normalized processing load 
\begin{equation}
\lambda_W
=\frac{\sum_{n=1}^{N}|\Phi_n|}{\sum_{j=1}^{M}W_j}
=\frac{200}{W_j}
\in\{1,1.25,1.67,2,2.5,3.33\}.
\label{eq:normalized_processing_load}
\end{equation}
FREDI is compared with a Sum-Utility benchmark over the same feasible associations and resources, maximizing $\sum_n\mathcal{U}_n$. Since multiple optima may yield different fairness, we report the minimum and maximum Jain's indices. We also validate the decomposition by exactly solving a finite empirical form of $\mathbf{P}_0$ at $\lambda_W\in\{1.25,2,3.33\}$, maximizing $\sum_n\rho_n\ln(\mathcal{U}_n)$ over the same feasible candidates.

\textit{Scenario S3: security constraints and scalability.}
Scenario~S3 stress-tests FREDI Stage~A through two complementary sub-experiments. In S3-A, FREDI is compared with an Unrestricted-PF reference under identical channels and resources. Unrestricted-PF removes only the UE--ES security-eligibility constraint while retaining the same proportional-fair resource allocation. In S3-B, FREDI is evaluated as the number of UEs increases.

In S3-A, seven UE security-demand profiles produce $\kappa_{\mathrm{elig}}\in\{1,0.89,\ldots,0.33\}$, each with 50 paired channel realizations sharing seeds across the two methods. Security-induced infeasibility is retained as an experimental outcome rather than resampled away.

In S3-B, $N\in\{6,12,24,36,60,96,144\}$ with approximately balanced UE security tiers. To separate runtime growth from increasing resource scarcity, the per-ES capacities scale as
\begin{equation}
W_j(N)=\frac{100N}{3},
\qquad
B_j(N)=20\left(\frac{N}{9}\right)\ \text{MHz}.
\label{eq:scalability_capacity_scaling}
\end{equation}
Fifty random realizations are used for each system size.

S1 is deterministic on the fixed test set. S2 and S3 use 50 realizations per operating point. Continuous quantities are reported as sample means with 95\% confidence intervals unless otherwise stated; S3-B additionally reports median solve time, whereas the binary S3-A infeasibility rate uses 95\% Wilson confidence intervals. For S2, the confidence intervals describe variability over the paired event-stream and network realizations drawn from the fixed held-out pool.

\subsubsection{Performance Metrics}

For S1, the primary detection metrics are true-positive rate (TPR), false-positive rate (FPR), and F1 score. 
The computation saving is
\begin{equation}
G_{\mathrm{comp}}=1-\frac{1}{|\Phi|}
\sum_{i\in\Phi}
\frac{\operatorname{MAC}(q_i)}{\operatorname{MAC}(Q)},
\label{eq:computation_saving}
\end{equation}
where $q_i$ is the exit used for event~$i$, $\operatorname{MAC}(q_i)$ is the cumulative MAC count up to that exit, and $Q$ is the final exit.

For a nonnegative vector $\mathbf{z}=(z_1,\ldots,z_N)$,
Jain's index~\cite{jain1984quantitative} is
\begin{equation}
J(\mathbf{z})
=
\frac{\left(\sum_{n=1}^{N}z_n\right)^2}
{N\sum_{n=1}^{N}z_n^2}.
\label{eq:jain_satisfaction_index}
\end{equation}

For S2, let $U_n\equiv\mathcal{U}_n(\alpha_n^l,\alpha_n^u)$ denote the final empirical utility. We report the Jain's indices
\begin{equation}
J_R=J(\{R_n^*/|\Phi_n|\}_{n=1}^{N}),
\qquad
J_U=J(\{U_n\}_{n=1}^{N}),
\label{eq:s3_jain_indices}
\end{equation}
and the aggregate utility $\sum_n U_n$. To characterize the empirical response factor in~\eqref{eq:resource_utility_response_factor} at the realized FREDI allocation, define
\begin{equation}
\kappa_n^*
\equiv\kappa_n(R_n^*)
=\frac{U_n}{R_n^*/|\Phi_n|},
\qquad
\mathrm{CV}_{\kappa}
=\frac{\operatorname{std}(\{\kappa_n^*\})}
{\operatorname{mean}(\{\kappa_n^*\})}.
\label{eq:s3_kappa_cv}
\end{equation}
Finally, the decomposition is compared with direct empirical optimization through
\begin{equation}
F_0^{\mathrm{FREDI}}
=\sum_{n=1}^{N}\rho_n\ln U_n^{\mathrm{FREDI}},
\qquad
F_0^{\mathrm{direct}}
=\sum_{n=1}^{N}\rho_n\ln U_n^{\mathrm{direct}}.
\label{eq:s3_direct_p0_objectives}
\end{equation}
The direct formulation excludes zero-utility empirical operating points because the logarithmic objective is undefined there.

The aggregate service ratio used in S3-A is
\begin{equation}
S
=
\frac{\sum_{n=1}^{N}R_n}
{\sum_{n=1}^{N}|\Phi_n|}.
\label{eq:aggregate_service_ratio}
\end{equation}
For S3-A, define $e_{nj}=1$ when UE~$n$ is security-eligible for ES~$j$, and $e_{nj}=0$ otherwise. The eligible-association ratio is
\begin{equation}
\kappa_{\mathrm{elig}}
=
\frac{1}{NM}
\sum_{n=1}^{N}\sum_{j=1}^{M}e_{nj},
\label{eq:eligible_association_ratio}
\end{equation}
and the infeasible-realization rate is
\begin{equation}
P_{\mathrm{infeas}}
=
\frac{K_{\mathrm{infeas}}}{K}.
\label{eq:infeasible_realization_rate}
\end{equation}
We additionally report the actual feasible-pair ratio after preprocessing and ES processing and bandwidth utilization. For S3-B, we report MOSEK solve time and solver optimality status.

\subsection{Dual- versus Single-Threshold Screening Performance}
\label{subsec:dual_threshold_results}

Fig.~\ref{fig:s1_tpr_computation_saving} and Table~\ref{tab:s1_operating_points}
compare DT and ST under the same allocated edge-processing budget. Both policies use
identical backbones and identical exit-wise confidence traces, so the differences isolate
the effect of decoupling the two thresholds.

The coupling is most damaging under tight budgets. Because ST ties $\alpha^l=1-\tau$ to
$\alpha^u=\tau$, reducing the offloaded load forces $\tau\to1$, which raises the critical-exit
threshold and widens the undecided interval $\mathcal I$ at the same time. For MobileNetV2 at
$\eta_R\in\{0.1,0.2\}$ the load cap admits exactly one ST setting, $\tau=1$: then
$(\alpha^l,\alpha^u)=(0,1)$, $\mathcal I$ spans the whole confidence range, no sample crosses
either threshold, and all are declared normal at the final exit by~\eqref{eq:label_function}.
Table~\ref{tab:s1_operating_points} records the consequence --- a true-positive rate of exactly
zero, and no computation saving either, since nothing exits early. DT decouples the two
decisions and attains TPRs of 0.197 and 0.396 with computation savings of 91.1\% and 67.5\% at
the same budgets. ShuffleNetV2 does not collapse ($\tau=0.9975$ stays feasible), so the effect
depends on how sharply a backbone concentrates its confidence traces; DT still leads it on every
metric there, most visibly in computation saving (55.0\% against 5.6\% at $\eta_R=0.1$).

At the intermediate budget $\eta_R=0.3$, DT improves MobileNetV2 TPR from 0.576 to 0.593
and F1 from 0.729 to 0.742 while raising computation saving from 35.2\% to 58.8\%, at
essentially the same offloaded load (29.0\% to 29.9\%). For ShuffleNetV2 the two policies
select operating points with identical TPR, FPR, F1 and offloaded load, and DT gains only in
computation saving (30.9\% to 35.9\%). At $\eta_R=0.5$ both backbones show small TPR and F1
gains, and the computation benefit becomes model-dependent.

Above $\eta_R=0.5$ the comparison changes character, because ST cannot spend the budget it is
given. Its offloaded set is $\{$events whose confidence reaches $\tau$ at some exit$\}$, which
on the balanced test corpus saturates near half the events as $\tau\to0.5$: Table~\ref{tab:s1_operating_points}
shows the ST load rising only from 49.9\% at $\eta_R=0.5$ to 50.5\% at $\eta_R=0.8$, while DT
reaches 79.7\%. The apparently unfavourable DT entries at $\eta_R=0.8$ --- FPR 0.599 and F1
0.767 against 0.094 and 0.912 for ST --- therefore compare a policy operating at its budget
against one operating far below it. They also reflect the objective rather than the policy class: $\mathbf{SP}_n$
maximizes TPR subject to the load cap and never penalizes false positives, so DT spends
whatever capacity Stage~A reserves. Where precision matters, the same machinery accepts an
F1- or FPR-constrained utility without any change to Stage~A.

The value of the second threshold is therefore not a uniform gain on every classification
metric, but the removal of a structural constraint: it keeps detection alive where the coupled
policy collapses, and it makes the full range of processing budgets reachable.

\begin{figure}[t]
    \centering
    \includegraphics[
        width=0.80\columnwidth
    ]{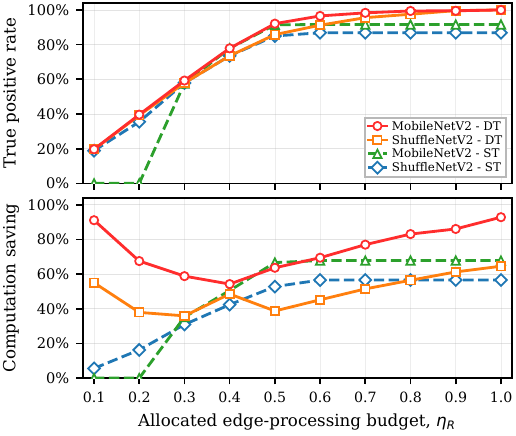}
    \caption{True positive rate (top) and computation saving under the dual-threshold (DT)
    and coupled single-parameter (ST) policies as the allocated edge-processing budget $\eta_R$ increases.}
    \label{fig:s1_tpr_computation_saving}
\end{figure}

\begin{table}[t]
\caption{Representative S1 operating points on the 2,023-image held-out test set.}
\label{tab:s1_operating_points}
\centering
\footnotesize
\setlength{\tabcolsep}{4.0pt}
\renewcommand{\arraystretch}{1.05}

\resizebox{\columnwidth}{!}{%
\begin{tabular}{cccccccc}
\toprule
CNN & Policy & $\eta_R$ & TPR & FPR & F1 & Offloaded load & Comp.\ saving \\
\midrule
MobileNetV2 & DT & 0.1 & \textbf{0.197} & 0.003 & \textbf{0.328} & 10.0\% & \textbf{91.1\%} \\
MobileNetV2 & ST & 0.1 & 0.000 & 0.000 & 0.000 & 0.0\% & 0.0\% \\
MobileNetV2 & DT & 0.2 & \textbf{0.396} & 0.003 & \textbf{0.566} & 20.0\% & \textbf{67.5\%} \\
MobileNetV2 & ST & 0.2 & 0.000 & 0.000 & 0.000 & 0.0\% & 0.0\% \\
MobileNetV2 & DT & 0.3 & \textbf{0.593} & 0.005 & \textbf{0.742} & 29.9\% & \textbf{58.8\%} \\
MobileNetV2 & ST & 0.3 & 0.576 & 0.004 & 0.729 & 29.0\% & 35.2\% \\
MobileNetV2 & DT & 0.8 & \textbf{0.994} & 0.599 & 0.767 & 79.7\% & \textbf{83.1\%} \\
MobileNetV2 & ST & 0.8 & 0.916 & 0.094 & \textbf{0.912} & 50.5\% & 67.8\% \\
\midrule
ShuffleNetV2 & DT & 0.3 & 0.579 & 0.020 & 0.724 & 30.0\% & \textbf{35.9\%} \\
ShuffleNetV2 & ST & 0.3 & 0.579 & 0.020 & 0.724 & 30.0\% & 30.9\% \\
ShuffleNetV2 & DT & 0.8 & \textbf{0.975} & 0.615 & 0.753 & 79.5\% & 56.4\% \\
ShuffleNetV2 & ST & 0.8 & 0.869 & 0.204 & \textbf{0.838} & 53.6\% & \textbf{56.6\%} \\
\bottomrule
\end{tabular}%
}

\vspace{1mm}

\parbox{0.96\columnwidth}{\footnotesize
DT and ST are evaluated using identical exit-wise confidence traces.
Bold entries highlight selected stronger detection or computation results
at the same processing budget.}
\end{table}

\subsection{Utility Performance and Decomposition Validation}
\label{subsec:end_to_end_validation}

The top panel of Fig.~\ref{fig:s2_jain_utility} compares the final-utility fairness of FREDI with the fairness envelope of Sum-Utility as processing contention increases. With sufficient processing capacity ($\lambda_W=1$), both methods attain $J_U=1$. FREDI remains essentially perfectly fair throughout the sweep, with mean $J_U\geq0.9993$. In contrast, the Sum-Utility envelope widens under heavier scarcity: at $\lambda_W=2.5$, its mean lower and upper boundaries are approximately 0.9824 and 0.9953, and at $\lambda_W=3.33$ they become approximately 0.8561 and 0.9851. Hence, maximizing aggregate utility alone can admit substantially less fair optima even though an arbitrary solver tie-break may conceal this behavior.

\begin{figure}[t]
    \centering
    \includegraphics[
        width=0.80\columnwidth
    ]{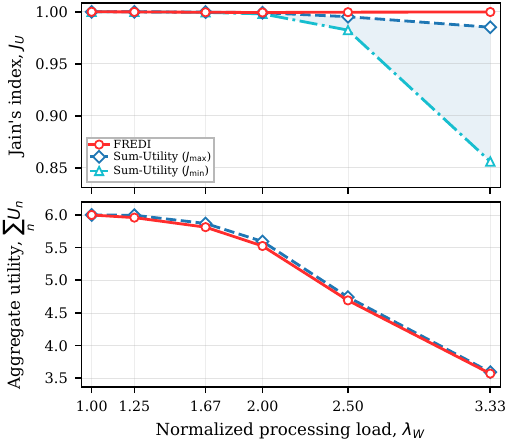}
    \caption{Jain's index $J_U$ (top) and aggregate utility (bottom) for FREDI and the Sum-Utility benchmark as the normalized processing load $\lambda_W$ increases.}
    \label{fig:s2_jain_utility}
\end{figure}

The fairness improvement is obtained with a small aggregate-utility cost. The bottom panel of Fig.~\ref{fig:s2_jain_utility} shows that Sum-Utility, by construction, gives the largest $\sum_n U_n$, while FREDI remains close over the full load range. At $\lambda_W=1$, both attain an aggregate utility of 6. As contention increases, the largest mean difference over the six operating points is 0.0692 at $\lambda_W=2$, where FREDI and Sum-Utility obtain 5.5252 and 5.5944, respectively. Under severe scarcity ($\lambda_W=3.33$), the corresponding values are 3.5712 and 3.5936. Thus, FREDI maintains near-uniform user utility while sacrificing little aggregate detection utility.

Table~\ref{tab:s2_validation} assesses the decomposition from two complementary viewpoints. First, the normalized processing allocation is perfectly balanced at every tested load, $J_R=1$, and the final utility remains extremely close to this allocation-level fairness, with $J_U\geq0.9993$. At the realized FREDI operating points, the mean coefficient of variation of $\kappa_n^*=U_n/(R_n^*/|\Phi_n|)$ never exceeds 2.71\%. The repeated small cross-UE dispersion values show that the realized response residual remains balanced across UEs over the load sweep; they do not by themselves establish the per-UE constancy required by Corollary~\ref{cor:fredi_global_optimality} or evaluate the response envelope $\overline{\kappa}_n$ in Theorem~\ref{theo:fredi_global_bound}.

Second, we compare FREDI with direct finite empirical optimization of the original logarithmic objective. All 150 direct-$\mathbf{P}_0$ validation instances (50 paired realizations at each of three representative loads) are solved to optimal integer status. As required, $F_0^{\mathrm{direct}}\geq F_0^{\mathrm{FREDI}}$ in every case within numerical tolerance. The mean objectives remain close, particularly under severe scarcity, where they are $-3.1138$ and $-3.1008$ for FREDI and direct optimization, respectively. This direct comparison gives a realized empirical assessment of the end-to-end loss, complementing the general upper bound in Theorem~\ref{theo:fredi_global_bound}. Together, the fairness consistency, small $\mathrm{CV}_{\kappa}$, and direct-objective comparison support the FREDI two-stage decomposition over the evaluated operating regime without presuming exact end-to-end equivalence.

\begin{table}[t]
\caption{Empirical validation of the FREDI decomposition and direct finite empirical optimization of $\mathbf{P}_0$.}
\label{tab:s2_validation}
\centering
\footnotesize
\setlength{\tabcolsep}{4.0pt}
\renewcommand{\arraystretch}{1.05}
\begin{tabular}{cccccc}
\toprule
$\lambda_W$ & $J_R$ & $J_U$ & $\mathrm{CV}_{\kappa}$ (\%) & $F_0^{\mathrm{FREDI}}$ & $F_0^{\mathrm{direct}}$ \\
\midrule
1.00 & 1.0 & 1.0000 & 0.00 & -- & -- \\
1.25 & 1.0 & 0.9999 & 1.06 & $-0.0409\pm0.0065$ & $-0.0040\pm0.0023$ \\
1.67 & 1.0 & 0.9995 & 2.27 & -- & -- \\
2.00 & 1.0 & \textbf{0.9993} & \textbf{2.71} & $-0.4969\pm0.0138$ & $-0.4228\pm0.0128$ \\
2.50 & 1.0 & 0.9996 & 2.06 & -- & -- \\
3.33 & 1.0 & 0.9998 & 1.34 & $-3.1138\pm0.0109$ & $-3.1008\pm0.0084$ \\
\bottomrule
\end{tabular}
\vspace{1mm}

\parbox{0.96\columnwidth}{\footnotesize Values are averaged over 50 paired realizations. $J_R=J(\{R_n^*/|\Phi_n|\})$ and $J_U=J(\{U_n\})$. The $F_0$ entries report mean $\pm$ 95\% confidence-interval half-width. Direct-$\mathbf{P}_0$ validation is evaluated at three representative processing loads. Bold entries mark the most stringent observed FREDI fairness-consistency point, where the minimum $J_U$ and maximum $\mathrm{CV}_{\kappa}$ occur.}
\end{table}

\subsection{Impact of Security Constraints and System Scalability}
\label{subsec:security_scalability_results}

\subsubsection{Security-Constrained Association}
\label{subsec:security_constraint_results}

Fig.~\ref{fig:s3_security_pressure} jointly shows the service and feasibility effects of security-induced restrictions on UE--ES association. As shown in the top panel, when $\kappa_{\mathrm{elig}}\geq0.778$, FREDI serves essentially all offered workload and remains close to the eligibility-unconstrained PF resource-allocation upper bound (Unrestricted-PF). Below this region, the service ratio deteriorates rapidly, falling to about 0.661 at $\kappa_{\mathrm{elig}}=0.444$ and to 0.333 at $\kappa_{\mathrm{elig}}=0.333$.

The bottom panel of Fig.~\ref{fig:s3_security_pressure} shows the corresponding infeasible-realization rate of the security-constrained FREDI formulation. All realizations remain feasible for $\kappa_{\mathrm{elig}}\geq0.778$, whereas $P_{\mathrm{infeas}}$ increases to 0.14 at $\kappa_{\mathrm{elig}}=0.444$ and 0.22 at $\kappa_{\mathrm{elig}}=0.333$. Hence, moderate security restrictions can be absorbed without loss of feasibility, whereas severe restrictions simultaneously reduce service capability and increase the probability of an infeasible association/resource-allocation instance.

\begin{figure}[t]
    \centering
    \includegraphics[
        width=0.80\columnwidth
    ]{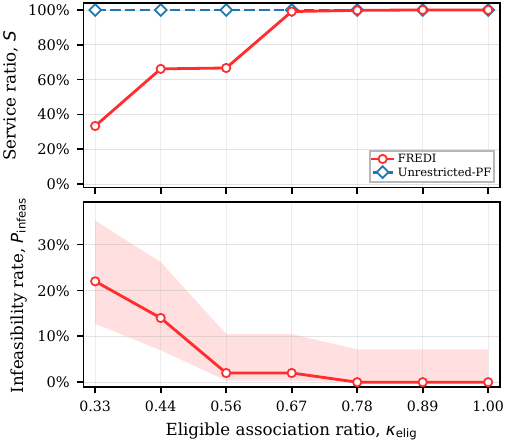}
    \caption{Service ratio (top) and infeasible-realization rate (bottom) of FREDI as the eligible association ratio $\kappa_{\mathrm{elig}}$ increases.}
    \label{fig:s3_security_pressure}
\end{figure}

The mechanism is resource fragmentation rather than a lack of aggregate capacity. The measured feasible-pair ratio closely tracks $\kappa_{\mathrm{elig}}$, confirming that security eligibility dominates the sweep, while the maximum ES bandwidth utilization rises from about 0.30 to 0.64. Under the most restrictive profile, mean processing utilization is only about 0.333 although one eligible ES is fully utilized. Capacity therefore remains stranded at ESs that some UEs are not permitted to access. Moreover, Jain fairness can still be high when service is poor (e.g., $J=1$ at $\kappa_{\mathrm{elig}}=0.333$), showing that fairness and service capability must be interpreted jointly.

\subsubsection{Solver Scalability}
\label{subsec:solver_scalability_results}

Fig.~\ref{fig:s3_scalability} reports the computational scalability of FREDI Stage~A, with communication and processing resources scaled with $N$ to maintain a comparable operating load. The median MOSEK solve time increases from about 2~ms at $N=6$ to 59~ms at $N=144$ and remains below 0.1~s even for the largest tested system. The wider interquartile range at large $N$ reflects realization-dependent mixed-integer search induced by different feasible UE--ES association sets.

All 350 measured instances reach an optimal integer status within the prescribed relative-gap tolerance of $10^{-4}$. Together with preprocessing that removes infeasible UE--ES pairs before model construction, these results show that the FREDI security-constrained resource-allocation formulation remains computationally practical over the investigated system sizes without relaxing solution quality.

\begin{figure}[!t]
    \centering
    \includegraphics[
        width=0.80\columnwidth
    ]{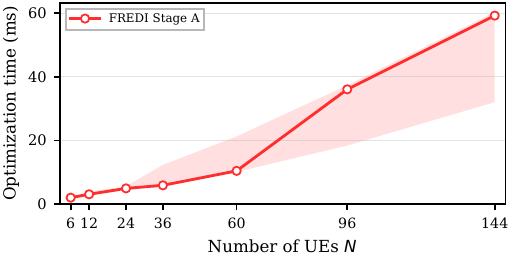}
    \caption{Optimization time of FREDI Stage~A as the number of UEs $N$ increases.}
    \label{fig:s3_scalability}
\end{figure}

%% file: 6__conclusion.tex
\section{Conclusion}
\label{sec:conclusion:TMC}

This paper presented FREDI, a secure cooperative wireless edge-intelligence framework that jointly coordinates dual-threshold inference, UE--ES association, uplink bandwidth and power, and shared ES processing. Stage~A uses proportional fairness, feasible-pair preprocessing, and an exact communication-variable projection to produce a mixed-integer exponential-cone formulation; Stage~B exactly searches the empirical confidence-score domain for each UE. A computable response envelope bounds the loss relative to the empirical global optimum and yields a proportional-response condition for global optimality. Numerically, FREDI maintains near-perfect utility fairness ($J_U\geq0.9993$ on average) while remaining close to the Sum-Utility benchmark; direct optimization of representative $\mathbf P_0$ instances further supports the two-stage decomposition. The results also quantify the effects of security-constrained association and confirm practical Stage-A scalability over the tested system sizes.

%% file: 7__appendices.tex
\appendices

\section{Proof of Lemma~\ref{lem:set_decreasing_monotonic}}
\label{sec:set_decreasing_monotonic}

\begin{proof}
Consider two feasible threshold pairs $(\alpha^l,\alpha^u)$ and $(\widetilde{\alpha}^l,\widetilde{\alpha}^u)$ such that $\widetilde{\alpha}^l\geq\alpha^l$ and $\widetilde{\alpha}^u\geq\alpha^u$. Since the CNN is fixed, the confidence trace ${C_q^{(k)}}_{q=1}^{L}$ of each event $I_k$ is independent of the thresholds.

Suppose that $I_k$ is classified as \emph{critical} under $(\widetilde{\alpha}^l,\widetilde{\alpha}^u)$, first at layer $q^\star$. For every $q<q^\star$, the event has not terminated as normal, so $C_q^{(k)}>\widetilde{\alpha}^l\geq\alpha^l$. If $C_q^{(k)}\geq\alpha^u$ at any such layer, then $I_k$ is already classified as critical under $(\alpha^l,\alpha^u)$; otherwise it remains undecided. At layer $q^\star$, critical classification under the larger thresholds gives $C_{q^\star}^{(k)}\geq\widetilde{\alpha}^u\geq\alpha^u$, so the event is critical under $(\alpha^l,\alpha^u)$ as well. The same argument applies at the final layer, while the equal-threshold convention in~\eqref{eq:label_function} preserves this implication when the undecided interval vanishes.

Hence,
$I_k\in\widehat{\Phi}^{\mathrm{critical}}
(\widetilde{\alpha}^l,\widetilde{\alpha}^u)
\ \Longrightarrow\
I_k\in\widehat{\Phi}^{\mathrm{critical}}
(\alpha^l,\alpha^u),$
and therefore
$\widehat{\Phi}^{\mathrm{critical}}
(\widetilde{\alpha}^l,\widetilde{\alpha}^u)
\subseteq
\widehat{\Phi}^{\mathrm{critical}}
(\alpha^l,\alpha^u)$,
which proves~\eqref{eq:critical_set_monotonicity}. Consequently, $|\widehat{\Phi}^{\mathrm{critical}}|$ is non-increasing in either threshold.
\end{proof}

\section{Proof of Lemma~\ref{lem:exact_communication_projection}}
\label{sec:proof_exact_communication_projection}

\begin{proof}
For a feasible UE--ES pair satisfying $\gamma_{nj}>\gamma_n^{\mathrm{EV}}$ and fixed $b>0$,
\begin{equation}
\frac{\partial r_{nj}^{\mathrm{se}}}{\partial p}
=\frac{b}{\ln 2}
\left(
\frac{\gamma_{nj}}{b+\gamma_{nj}p}
-\frac{\gamma_n^{\mathrm{EV}}}{b+\gamma_n^{\mathrm{EV}}p}
\right)>0,
\end{equation}
so the rate increases in $p$. For fixed $p>0$, write $r_{nj}^{\mathrm{se}}(b,p)=bq_{nj}(p/b)$, where $q_{nj}$ is concave and $q_{nj}(0)=0$. Since $q_{nj}(z)/z$ is non-increasing, $r_{nj}^{\mathrm{se}}(b,p)=p q_{nj}(p/b)/(p/b)$ is non-decreasing in $b$.

For necessity, any feasible selected link satisfies $r_{nj}^{\mathrm{se}}(b_{nj},p^{\max})\geq r_{nj}^{\mathrm{se}}(b_{nj},p_{nj})\geq\bar r_n$ and hence $b_{nj}\geq\underline b_{nj}$. For an unselected link, $b_{nj}\geq\underline b_{nj}x_{nj}$ is immediate. Summing over UEs gives
\begin{equation}
\sum_n\underline b_{nj}x_{nj}\leq\sum_n b_{nj}\leq B_j,
\end{equation}
which proves~\eqref{eq:projected_bandwidth_constraint}. Conversely, if this constraint holds, set $b_{nj}=\underline b_{nj}x_{nj}$ and $p_{nj}=p^{\max}x_{nj}$. Each selected feasible pair meets the secure-rate requirement by the definition of $\underline b_{nj}$; each unselected pair has zero rate and demand. The per-link bounds follow from $\underline b_{nj}\leq b^{\max}$ on $\mathbb M_n^{\mathrm f}$, and the aggregate constraint holds by construction. Thus, the recovered communication allocation is feasible.
\end{proof}

\section{Proof of Theorem~\ref{theo:fredi_global_bound}}
\label{sec:global_performance_bound}

\begin{proof}
For any admissible capacity $\xi$, let $(\alpha_n^{l,\xi},\alpha_n^{u,\xi})$ attain $V_n(\xi)$ and let $L_n^{\xi}$ be its load. Since $0<L_n^{\xi}\leq\xi$,
\[
\kappa_n(\xi)
=\frac{|\Phi_n|V_n(\xi)}{\xi}
\leq
\frac{|\Phi_n|\mathcal U_n(\alpha_n^{l,\xi},\alpha_n^{u,\xi})}{L_n^{\xi}}
\leq\overline{\kappa}_n.
\]
Conversely, evaluating $V_n(\xi)$ at the load of a candidate attaining~\eqref{eq:kappa_upper_envelope} gives the reverse inequality for the supremum; hence,~\eqref{eq:kappa_upper_envelope} is exactly the upper envelope of $\kappa_n(\xi)$ over the positive-utility capacity domain.
For the resource allocation of a global optimizer of~$\mathbf P_0$, replacing any threshold pair by a maximizer in~\eqref{eq:stage_b_value_function} cannot decrease the objective. Its resource variables also satisfy the Stage-A constraints; hence, its processing-coverage term is no larger than the globally optimal Stage-A value $A(\mathbf R^*)$. Applying~\eqref{eq:exact_objective_decomposition} and the preceding response bound therefore gives
\[
F_0^{\star}
\leq
A(\mathbf R^*)+
\sum_{n=1}^{N}\rho_n\ln\overline{\kappa}_n.
\]
The exact Stage-B searches yield $F_0^{\mathrm{FREDI}}=A(\mathbf R^*)+\sum_n\rho_n\ln\kappa_n(R_n^*)$. Subtracting the latter identity from the upper bound proves~\eqref{eq:fredi_global_suboptimality_bound}; the lower inequality follows because the FREDI solution is feasible for~$\mathbf P_0$.
\end{proof}